\def\spn{{\rm span}}
\def\sg{\sigma}
\def\va{\alpha}
\def\g{\gamma}
\def\w{\omega}
\def\SU{\textsf{SU}}
\def\GL{\textsf{GL}}
\def\U{\textsf{U}}
\def\triv{{\rm triv}}
\newtheorem{fact}{Fact}
\def\poly{{\rm poly}}
\newcommand\encircle[1]{%
  \tikz[baseline=(X.base)] 
    \node (X) [draw, shape=circle, inner sep=0] {\strut #1};}
\begin{document}

\author{Mart\'{i}n Larocca}
\affiliation{Los Alamos National Laboratory, Los Alamos, New Mexico, USA}
\author{Vojtech Havlicek}
\affiliation{IBM Quantum, IBM T.J. Watson Research Center, New York, USA}

\begin{abstract}
Kostka, Littlewood-Richardson, Plethysm and Kronecker coefficients are the multiplicities of irreducible representations in the decomposition of representations of the symmetric group that play an important role in representation theory, geometric complexity and algebraic combinatorics. We give quantum algorithms for computing these coefficients whenever the ratio of dimensions of the representations is polynomial and study the computational complexity of this problem. We show that there is an efficient classical algorithm for computing the Kostka numbers under this restriction and conjecture the existence of an analogous algorithm for the Littlewood-Richardson coefficients.  We argue why such classical algorithm does not straightforwardly work for the Plethysm and Kronecker coefficients and conjecture that our quantum algorithms lead to superpolynomial speedups for these problems. The  conjecture about Kronecker coefficients was disproved by Greta Panova in [arXiv:2502.20253] with a classical solution which, if optimal, points to a $\OC(n^{4+2k})$ vs $\tilde{\Omega}(n^{4k^2+1})$  polynomial gap in quantum vs classical computational complexity for an integer parameter $k$.
\end{abstract}

\title{Quantum Algorithms for Representation-Theoretic Multiplicities}
\maketitle

\begin{table*}
\centering
\renewcommand{\arraystretch}{1} 
\caption{\textbf{Quantum Algorithms for Branching Coefficients.} Choices of groups $H\subseteq G$, $H$-representation $R$, and an $H$-irrep $r^\va_H$ for which the algorithm in Thm.~\ref{theorem_quantum_algo} computes the studied multiplicity coefficients. Here, $n,a,b,c,d\in \mbb{N}$ where $a+b=n$ (Littlewood-Richardson) and $cd=n$ (Plethysm).  
 The cost $\dim(R)/\dim(r^{\va}_H)$ is the number of samples required to (exactly) compute the coefficient. 
}
\begin{tabular}{|l @{\hspace{1em}} c @{\hspace{1em}} c @{\hspace{1em}} c @{\hspace{1em}} c @{\hspace{1em}} c @{\hspace{1em}} c @{\hspace{1em}} c|} 
\hline
Problem & Input & \begin{tabular}[c]{@{}c@{}}Group\\ $G$\end{tabular} & \begin{tabular}[c]{@{}c@{}}Subgroup\\ $H\subseteq G$\end{tabular} & \begin{tabular}[c]{@{}c@{}}$H$-Rep\\ $R$\end{tabular} & \begin{tabular}[c]{@{}c@{}}$H$-Irrep\\ $r^{\va}_H$\end{tabular} & \begin{tabular}[c]{@{}c@{}}Output\\ $\mult{r^{\a}_H}{R}$\end{tabular} & \begin{tabular}[c]{@{}c@{}}Cost\\ $\OC(\frac{\dim(R)}{\dim(r^{\va}_H)})$\end{tabular} \\ 
\hline
Kostka & $\nu,\mu \vdash n$ & $S_n$ & $S_\mu:=\bigtimes_i S_{\mu_i}$ & $(r^\nu_{S_n})\downarrow^{S_n}_{S_\mu}$ & $\bigotimes_i r_{S_{\mu_i}}^{(\mu_i)}$ & $K_\nu^\mu$ & $\OC\big(d_\nu\big)$ \\
Littlewood & $\nu\vdash n$ and $\lm,\mu\vdash a,b$ & $S_n$ & $S_{a}\times S_{b}$ & $(r_{S_n}^\nu)\downarrow^{S_n}_{S_{a}\times S_{b}}$ & $r^\lm_{S_a} \otimes r^\mu_{S_b}$ & $c_{\lm\mu}^\nu$ & $\OC\big( \frac{d_\nu}{ d_\lm d_\mu}\big)$ \\
Plethysm & $\nu\vdash n$ and $\lm,\mu\vdash c,d$ & $S_n$ & $S_{c} \wr S_{d}$ & $(r^\nu_{S_n})\downarrow^{S_n}_{S_{c} \wr S_{d}}$ & $r^\lm_{S_c} \wr r^\mu_{S_d}$ & $a_{\lm\mu}^\nu$ & $\OC\big( \frac{d_\nu}{d_\lm^d d_\mu}\big)$ \\
\hline
Kronecker \cite{bravyi2023quantum} & $\nu,\lm,\mu \vdash n$ & $S_n\times S_n$ & $S_n$ & $(r^{\lm}_{S_n}\otimes r^{\mu}_{S_n} )\downarrow^{S_n\times S_n}_{S_n}$ & $r_{S_n}^\nu$ & $g_{\lambda \mu\nu} $ & $\OC\big(\frac{d_\lm d_\mu}{d_\nu}\big)$ \\
\hline
\end{tabular}
\label{Tab:summary}
\end{table*}

\paragraph{Introduction.--}Is there a computational problem that admits efficient quantum algorithms but no efficient randomized one? 
Quantum computing is driven by the hypothesis that the answer is yes, to a large extent because -- \textit{``...our best description of the physical world is quantum-mechanical and quantum mechanics seems hard to simulate by classical computers...''} \cite{Feynman1982}.
This intuition often comes with limited backing for specific computational problems and supporting it rigorously faces significant challenges in computational complexity theory. It would be ideal to prove that a {decision problem} admits efficient quantum algorithms but no randomized one, but doing that unconditionally entails showing that $\BPP \subset \BQP$ and in consequence separating $\BPP$ and $\PSPACE$, a major open problem in computational complexity theory. Here we resort to the next best thing: find a problem that could be outside $\BPP$; find an efficient quantum algorithm for it and study what makes it intractable classically. An excellent execution of this three-step process is Shor's algorithm for factoring and discrete logarithm \cite{shor1994algorithms}: problems that are not known to be either $\NP$-hard or $\BQP$-complete, but the evidence for their hardness is that we use factoring and discrete logarithm for public-key cryptography \cite{Diffie76, RSA78}. Here we give quantum algorithms for a class of problems in representation theory that could yield computational speedups, if our hardness conjectures could be similarly supported.

Let $G$ be a finite group. A unitary representation $R$ of $G$ on a finite-dimensional vector space $\HC=\mbb{C}^d$ is a function $R$ from $G$ to the group of unitary matrices on $\HC$ such that $R(g)R(h)=R(gh)$. Maschke's theorem asserts that for any $G$-representation $R$ there exists a unitary $U$ that simultaneously block-diagonalizes all $R(g)$ as:
\begin{equation}
UR(g)U^\dag = \bigoplus_{\va} I_{\mult{r^\va_G}{R}} \otimes r^{\va}_G(g),\,
\label{eq:block-diagonalization}
\end{equation}
where $r^\a_G$ is an irreducible $G$-representation (\emph{irrep}) labeled by $\va$.  The number of times each irrep $r^\va_G$ appears in $R$ is the \emph{multiplicity}  $\mult{r^\va_G}{R}$.
We denote $d_{R} = \dim(R)$ and $d_{\va} = \dim ( r^{\va}_G)$ the dimension of the vector spaces on which $R$ and $r^{\va}_G$ act on, use $R\downarrow^{G}_H$ for the restriction of a $G$-representation from $G$ to $H\subseteq G$, $\times$ for the direct product and $\wr$ for the wreath product. Let $S_n$ be the symmetric group on $n$ elements. Its irreps are labeled by integer partitions $\lm\vdash n$. Our algorithms compute:

\begin{enumerate}
\item \textbf{Kostka Numbers:} the multiplicity of  the trivial $S_\mu$-irrep in the restriction of an $S_n$-irrep: 
\begin{align}\label{eq_kostka}
    K^\mu_\nu = \mult{r^\triv_{S_\mu}}{r^{\nu}_{S_n} \downarrow^{S_n}_{S_\mu}},
\end{align}
where $\nu,\mu \vdash n$ and $S_\mu = \bigtimes_i S_{\mu_i}$ is a Young subgroup. $K^\mu_\nu$ also counts the number of semistandard Young tableaux of shape $\nu$ and content $\mu$. They play role in quantum tomography (see for example Ref.~\cite{Haah17}). 

\item  \textbf{Littlewood-Richardson Coefficients:} the multiplicity of $S_a \times S_b$-irreps in an $S_n$-irrep for $a+b = n$: 
\begin{align}\label{eq_lr}
    c^{\nu}_{\lm \mu} &= \mult{r^{\lm}_{S_a} \otimes r^{\mu}_{S_b}}{r^\nu_{S_n}\downarrow^{S_n}_{S_a \times S_b}}.
\end{align}
By Schur-Weyl duality \cite{fulton1991representation, Christandl05} this is also a multiplicity of $\SU(d)$-irreps in the tensor product of two other $\SU(d)$-irreps. The coefficient plays a role in angular momentum recoupling theory (for $d=2$, see for example Refs.~\cite{Wigner31, Weyl50, Pauncz79}) or nuclear and particle physics (for example for $d=3$, the ``eight-fold'' way  \cite{Gellman61}).

\item \textbf{Plethysm Coefficients:} the multiplicity of irreps of the wreath product subgroup $S_c \wr S_d\subset S_n$ for $cd=n$ in an $S_n$-irrep: 
\begin{align}\label{eq_pleth}
    a^\nu_{\lm\mu} &= \mult{r^{\lm \mu}_{S_c \wr S_d}}{r^{\nu}_{S_n} \downarrow^{S_n}_{S_c \wr S_d}}\,,
\end{align}
where $\lm,\mu,\nu \vdash c,d,n$. It is also the multiplicity of $\SU(d)$-irreps in the {composition} of a pair of them. They appear in the quantum marginal problem for distinguishable particles~\cite[Sec.~7]{Christandl14}. 
\end{enumerate}

Our construction also subsumes the algorithm for
{Kronecker coefficients} which previously appeared in Ref.~\cite{bravyi2023quantum}. Kronecker coefficients are  multiplicities of $S_n$-irreps in the tensor products of $S_n$-irreps:
\begin{align}\label{eq_kron}
    g_{\lambda \mu \nu} &= \mult{r^\nu_{S_n}}{r^{\lm}_{S_n} \otimes r^{\mu}_{S_n}},
\end{align}
and play role in the quantum marginal problem \cite{Christandl05, Christandl07}. 

Computation of Plethysm and Kronecker coefficients is worst-case $\#\P$-hard if we treat $n$ in $S_n$ as the input size (this corresponds to the \emph{unary} input encoding) \cite{Burgisser08, Fischer2020} and it has been conjectured that the same is true for the Littlewood-Richardson and Kostka coefficients \cite[Conjecture 5.11]{panova2023computationalcomplexityalgebraiccombinatorics}~\footnote{The computation of Littlewood-Richardson coefficients (to which the Kostka number computation reduces parsimoniously) is $\#\P$-hard for \emph{binary} encoding by a reduction to Knapsack \cite{narayanan2006complexity}. Binary encoding encodes the length of each row in the input integer partition in binary; this means that for a fixed input size $n$, the order of the underlying symmetric group ranges from $S_n$ to $S_{2^n}$. Fixing the order of $S_n$, as we do here, is equivalent to using unary encoding of the input. It is open if Kostka and LR-coefficients are $\#\P$-hard with unary encoding. See~\cite{panova2023computationalcomplexityalgebraiccombinatorics}.}. While it would be surprising to find a quantum algorithm that computes the coefficients efficiently on all inputs, we give quantum algorithms that are efficient on nontrivial subsets of  inputs and analyze their performance.

\paragraph{Quantum Algorithms.--}

Let $G$ be a finite group and $R$ be a $G$-representation on $\mbb{C}[G]$ with orthonormal basis $\{\ket{g}\}_{g\in G}$ acting as:
\begin{align}
    R(h)\ket{g} &= \ket{hg}, \quad \text{for all $h,g\in G$.}
\end{align}
Such representation is called left-\emph{regular representation}. A transformation $U$ that maximally block-diagonalizes this representation as in Eq.~\eqref{eq:block-diagonalization} is the \emph{Quantum Fourier Transform} (QFT); we write $G$-QFT whenever we need to make the dependence on $G$ explicit.
It is defined as: 
\begin{align}
  \langle \a,i,j|U|g \rangle &= \sqrt{\frac{d_\a}{|G|}}\braket{i}{r_G^\a(g)|j},
  \label{Eq:QFT_matrix_elements}
\end{align}
where $r_G^\a(g)$ is a $G$-irrep labeled by $\a$ and $\braket{i}{r_G^\a(g)|j}$ for $i,j \in [d_\alpha]$ is its $i,j$-th matrix element. Efficient quantum algorithms for $G$-QFT are known for $G=S_n$~\cite{beals1997quantum} and some other finite groups 
~\cite{Moore07}. We use QFT over the Young subgroup $S_\mu = \bigtimes_i S_{\mu_i}$ and the wreath product group $S_a \wr S_b$ from Ref.~\cite{Moore07} (see Supp. Mat.). 

Our algorithm also relies on the efficient preparation of states of the form:
\begin{align}
\rho^{\a}_G &= 
U^\dag \left( \ket{\a}\bra{\a} \otimes \frac{I_{d_\a}}{d_\a} \otimes \frac{I_{d_\a}}{d_\a} \right) U = \frac{\Pi^G_{\a}}{ d_\a^2},
\label{Eq:InputState}
\end{align}
where $I_{d_\a}$ is the $d_\a \times d_\a$ identity matrix and $\Pi^G_{\a}$ is the projector onto the maximally isotypic subspace $\a$ in the regular representation of $G$; where the maximally isotypic subspace of a given type is the subspace of $\HC$ containing all irreps of such type. Eq.~\eqref{Eq:InputState} shows that an efficient $G$-QFT is sufficient, but possibly not necessary, to prepare this state. Equipped with this, we prove:

\begin{theorem}\label{theorem_quantum_algo}
Let $H\subseteq G$. Let $\mathcal{H}$ be the size of a quantum circuit for $H$-QFT, $\mathcal{R}$ be the size of the circuit for preparing $\rho^{\a}_G$ in Eq.~\eqref{Eq:InputState} and $\mathcal{D}$ be the size of a circuit that implements a controlled action of $G$-regular representation. Then there is an $\OC((\mathcal{H} + \mathcal{R} + \mathcal{D}) (d_\alpha/d_\beta)^2)$ quantum algorithm for the following problem: given a pair of irrep labels $(\alpha, \beta)$ as an input, compute  $\mult{r^\b_H}{r^\a_{G}\downarrow_H^{G}}$ (with constant probability of success) \footnote{The irreps $\alpha, \beta$ can be labeled with $O(\log|G|)$-many bits each, so that the input size $n$ relates non-trivially to the order of the group $G$. Since there are at most $|G|$ distinct irreps, such labelling always exists.}.
\end{theorem}

\begin{proof}
The algorithm is a variation on Harrow's generalized phase estimation \cite{harrow2005applications} shown in Fig.~\ref{fig_1}.
Let $G$ be a finite group and let $H\subseteq G$. 
The algorithm uses two registers: a {target register} $\mathcal{T}$ and a {control register} $\mathcal{C}$.
It inputs $\rho^\a_{G}$ from Eq.~\eqref{Eq:InputState} to the target register and a uniform superposition over $H \subseteq G$ to the control, prepared by applying an inverse $H$-QFT to the state $\ket{\text{triv},1,1}$, where $\text{triv}$ labels the trivial irrep of $H$. The algorithm then applies a controlled action of the $G$-regular representation $R$ to $\rho^{\a}_{G}$. Since $R$ is controlled by a register that contains a superposition over the subgroup $H$, this acts as a restricted representation $R\downarrow_H^{G}$ and decomposes into irreducible representations of $H$. The algorithm concludes by measuring the $H$-irrep label on the control register. All of this can be described by: 
\begin{equation}
\begin{aligned}
    &E_\b = \frac{1}{\sqrt{|H|}}   \sum_{h\in H} \Bigg[ (\ketbra{\b}{\b}\otimes I_{d_\b}\otimes I_{d_\b})V\ket{h} \Bigg]_{\mathcal{C}} \otimes \Bigg[R(h) \Bigg]_\mathcal{T},
\end{aligned}
\end{equation}
where $V$ is the $H$-QFT and $\beta$ labels inequivalent irreps of $H$. A measurement outcome $\beta$ appears with probability:
\begin{equation}
\begin{aligned}
    p_\a(\b) &= \text{Tr}\left[E_\b  \rho^\alpha_{G} E_\b^\dag \right] 
    = \mult{r^\b_{H}}{r^\a_{G}\downarrow^{G}_H} \frac{d_\b}{d_\a} . 
    \label{eq_prob}
\end{aligned}
\end{equation}
Because $\mult{r^\beta_{H}}{r^\a_{G} \downarrow^{G}_H}$ is an integer, we can get its exact value in $\OC((d_\a/d_\b)^2)$-many shots with high probability by the Chernoff-Hoeffding bound (see Supp. Mat.). The algorithm computes $\mult{r^\b_{H}}{r^\a_{G} \downarrow^{G}_H}$ in $\OC((\mathcal{H} + \mathcal{R} +\mathcal{D})) (d_\a/d_\b)^2)$ time.  
\end{proof}
Tab.~\ref{Tab:summary} summarizes different choices of $H$ and $G$ where the quantum algorithm in Thm.~\ref{theorem_quantum_algo} computes the coefficients in Eqs.~\eqref{eq_kostka} to \eqref{eq_kron}. 
Note that nearly all cases considered have the same ambient group $G=S_n$. Since $\log |S_n| = \log n! \approx n \log n \in \widetilde{\OC}(n)$, we parametrize the complexity of the algorithms below by $n$. We parametrize the complexity of the problem in terms of dimensions of $S_n$-irreducible representations which can be computed in $\OC(n^2)$ using the hook-length formula (see the Supp. Mat.). Following results Ref.~\cite{kawano2016quantum}, we also assume that the $S_n$-QFT can be implemented in gate complexity $\OC(n^4)$.

\begin{figure}[h]
\centering
\includegraphics[width=\columnwidth]{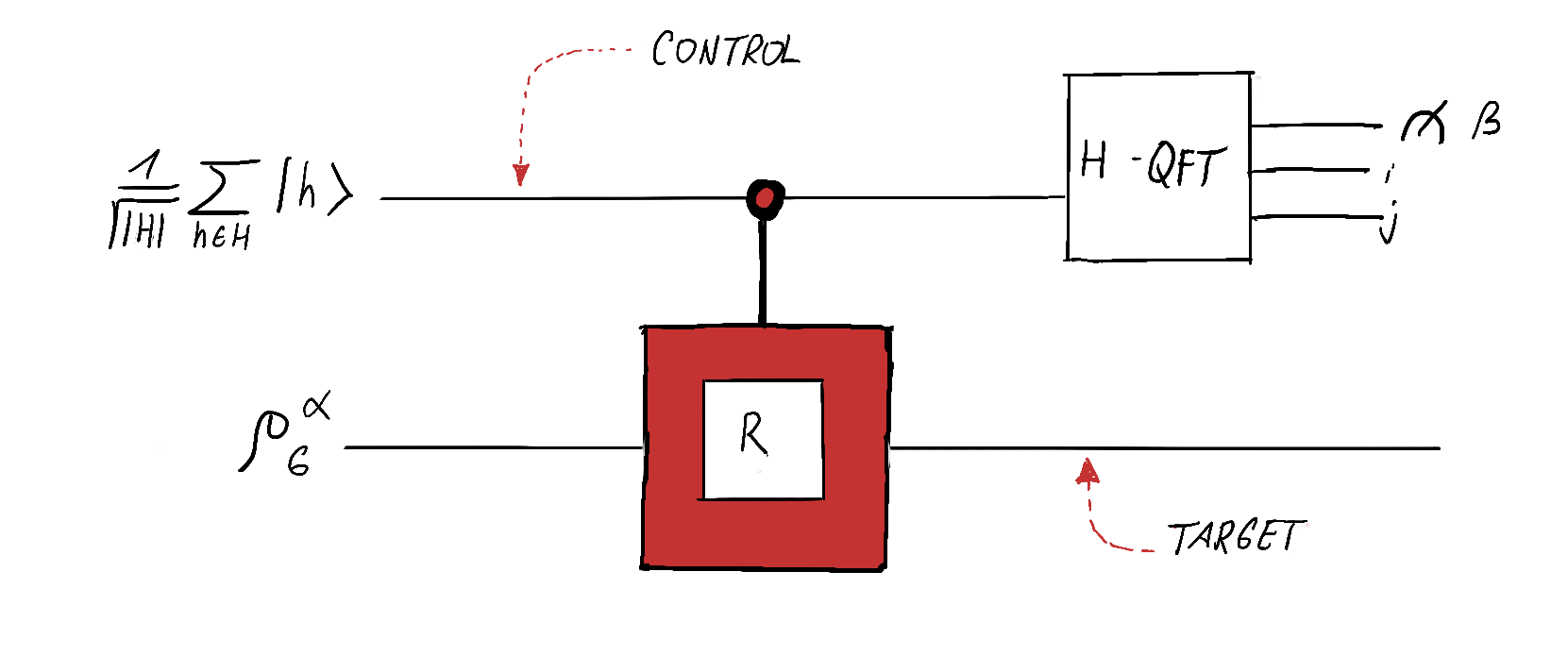}
\caption{The quantum algorithm in Thm.~\ref{theorem_quantum_algo}. The algorithm initializes the state from Eq.~\eqref{Eq:InputState}, applies a $G$-regular representation controlled by a uniform superposition over $H$ and weak-Fourier samples the control register.
}
\label{fig_1}
\end{figure}

\paragraph{Kostka Numbers.---}
 The Kostka number $K_\nu^\mu$  is the multiplicity of the trivial irreducible representation in the restriction of an $S_n$-irrep $r^\nu_{S_n}$ to $S_\mu$, $K^\mu_\nu = \mult{r_{S_\mu}^{\triv}}{r^\nu_{S_n} \downarrow_{S_\mu}^{S_n}}$.  A quantum algorithm for the Kostka numbers fixes $H = S_\mu \subseteq G=S_n$ for an integer partition $\mu \vdash n$, where $S_\mu$ is the Young's subgroup
$S_\mu := \bigtimes_i S_{\mu_i}$. An efficient $S_\mu$-QFT can be setup from a tensor product of $S_{\mu_i}$-QFTs for each factor in $\mu \vdash n$ and its complexity is comparable to $S_n$-QFT. Theorem \ref{theorem_quantum_algo} then gives the label of the trivial irreducible representation of $S_\mu$ with probability (c.f. Eq.~\eqref{eq_prob}):
 \begin{align}
     p_{\nu,\mu}(\triv) &=  K^\mu_\nu / d_\nu.
     \label{Eq:kostka}
 \end{align}
 This suffices to compute the Kostka number in $\OC(n^4 d_\nu^2)$ Computing Kostka numbers $K^\mu_\nu$ on an input $\mu, \nu$ in \emph{unary} is conjectured to be $\#\P$-complete in \cite[Conjecture 5.11]{panova2023computationalcomplexityalgebraiccombinatorics} and one could hope that restricting $d_\nu \leq \poly(n)$ remains classically hard. We show that this is not the case: 
 \begin{theorem}
     There is a deterministic algorithm that on an input $\mu,\nu \vdash n$ in \emph{unary}, such that $d_\nu \leq \poly(n)$,  computes the Kostka number $K_\nu^\mu$ in polynomial time in $n$.
     \label{Thm:Kostkas}
 \end{theorem} 
 \noindent See~the Supp. Mat. for a proof.


 \paragraph{Littlewood-Richardson Coefficients.--}
 The quantum algorithm for Littlewood-Richardson (LR) coefficients fixes $H = S_a \times S_b \subseteq G=S_n$  for $a + b = n$. The irreducible representations of $S_a \times S_b$ are a tensor product of irreducibles $r_{S_a}^\lm \otimes r_{S_b}^\mu$, labeled by pairs $\a = (\lm,\mu)$ with $\lm,\mu \vdash a,b$. We have $c^\nu_{\lm \mu } = \mult{r^\lm_{S_a} \otimes r^\mu_{S_b}}{r^{\nu}_{S_n} \downarrow^{S_n}_{S_a \times S_b}}$, and from Thm.~\ref{theorem_quantum_algo}, the algorithm samples from:
 \begin{align}
     p_{\nu}(\lm,\mu) &=  c^{\nu}_{\lm \mu}\frac{d_\lm d_\mu}{d_\nu}.
 \end{align}
 This computes the LR coefficient in $\OC\left(n^4 \left(\frac{d_\nu}{d_\lambda d_\mu}\right)^2 \right)$ gate complexity.
 
LR coefficients count combinatorial objects called the LR tableaux. A skew shape $\nu/\lm$ is a diagram where $\lm$ is subtracted from $\nu$ 
 and a skew semistandard Young Tableau is a skew shape filled with integers such that the entries are non-decreasing along every row and strictly increasing along every column. The LR tableau is a skew Young tableau of shape $\nu/\lm$ whose row-reading evaluates to $\mu$. The LR coefficient $c^\nu_{\lambda \mu}$ is the number of such tableaux; this fact is called the LR rule. See~\cite{Howe12, James84, sagan2001symmetric} and the Supp.~Mat. for details (esp. definition of `row reading'). The quantum algorithm in Thm.~\ref{theorem_quantum_algo} is only efficient for $c^\nu_{\lm \mu} \leq \poly(n)$ in which case it counts only up to $\poly(n)$ many LR tableaux. 
In analogy to Thm.~\ref{Thm:Kostkas}, we expect that there is a classical algorithm that efficiently computes LR coefficients under this constraint:
\begin{hypothesis}
 There is a polynomial-time randomized algorithm for computing  LR coefficients $c^\nu_{\lm \mu}$ for $\nu \vdash n$, $\lm \vdash a$ and $\mu \vdash b$ for $a+b = n$ and $d_\nu /d_\lm d_\mu \leq \poly(n)$.
     \label{Conj:LR}
\end{hypothesis} See also \cite{panova2025polynomialtimeclassicalversus}. If true, this rules out superpolynomial speedups for computing LR coefficients by Thm.~\ref{theorem_quantum_algo}. \footnote{We make a distinction between a hypothesis and a conjecture. By a hypothesis, we mean claim that we strongly believe in but cannot prove. A conjecture labels a claim that we cannot prove and have no any strong beliefs about.}

\textit{Plethysm Coefficients.---}
  The algorithm for Plethysm coefficients sets $H = S_c \wr S_d \subseteq G=S_n$ for $cd = n$.  An efficient implementation of the QFT over this group is known only for $|S_c| \leq \poly(n)$ \cite{Moore07}.  
  Under this restriction, the quantum algorithm in Thm.~\ref{theorem_quantum_algo} samples efficiently from a distribution:
 \begin{align}
     p_\nu(\lm,\mu) &= a^\nu_{\lm\mu} \frac{d_\lambda d_\mu^d }{d_\nu},
 \end{align}
 where $a^\nu_{\lm\mu}=\mult{r^{\lm,\mu}_{S_c \wr S_d}}{r^{\nu}_{S_n}\downarrow^{S_n}_{S_c \wr S_d}}$ is the  Plethysm coefficient. It is not known if Plethysm coefficients count a well defined set of combinatorial objects (Prob.~9 on Stanley's list \cite{Stanley99}), and the simulation strategy of Thm.~\ref{Thm:Kostkas} doesn't straightforwardly apply \cite{Ikenmeyer17, Fischer2020, pak2015complexity}. The classical complexity of computing the Plethysm coefficients was recently studied in Ref.~\cite{Fischer2020} where it was shown that the problem is in $\P$ if the number of parts $\ell(\lambda)$ is constant and $\mu = \triv$, but becomes $\#\P$-hard even for fixed $d = 3$ and $\ell(\lambda)$ unrestricted. While this suggests that the complexity of the problem is  controlled by the parameter $c$, its complexity for $|S_c| = \poly(n)$  ($c \sim \log(n)$) remains, to our best knowledge, open. We thus state:
 \begin{conjecture}[See also \cite{panova2025polynomialtimeclassicalversus}]
     There is no polynomial time randomized algorithm that (with high probability) computes $a^\nu_{\lambda \mu}$ for $\lambda \vdash c$ and $\mu \vdash d$, such that $cd = n$, $c \leq \poly(\log(n))$ and $d_\lambda d_\mu^d / d_\nu \geq 1/\poly(n)$.
     \label{Conj:Plethysms}
 \end{conjecture}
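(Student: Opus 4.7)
The plan is to assemble evidence for this conjecture in three layers: a complexity-theoretic reduction adapted to the restricted parameter regime, a compatibility check with the quantum-efficiency window, and structural obstructions to naive classical algorithms. I would start from the $\#\P$-hardness construction of Fischer and Ikenmeyer~\cite{Fischer2020}, which establishes hardness for fixed $d=3$ with unrestricted $\ell(\lambda)$. Their reduction typically produces instances where $c$ is polynomial in $n$, violating $c\leq \poly(\log n)$; the first technical step is to re-engineer the reduction (for instance by padding $d$ rather than $c$, or by isolating a sub-family of hard instances with polylogarithmic $c$) so that hard instances land in the target regime.

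Next, I would verify that these instances can be placed inside the quantum-efficiency window $d_\lambda d_\mu^d/d_\nu \geq 1/\poly(n)$. Using the hook-length formula, $d_\nu$ can be controlled by restricting $\ell(\nu)$, while $d_\lambda$ is at most quasi-polynomial when $c\leq \poly(\log n)$. The aim is to exhibit an infinite family of $(\lambda,\mu,\nu)$ in which this ratio is at least inverse-polynomial while $a^\nu_{\lambda\mu}$ remains $\#\P$-hard. A third step lifts exact $\#\P$-hardness to hardness against randomized algorithms: because $a^\nu_{\lambda\mu}$ is an integer of $\poly(n)$ bits in the relevant regime, a polynomial-time randomized algorithm computing it correctly with probability $\geq 2/3$ would give a $\BPP$-style procedure for a $\#\P$-complete integer-valued function, and via Toda's theorem a faithful reduction from $\#\P$ would then collapse the polynomial hierarchy.

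The hard part will be the tension between the parameter constraints and the known hardness constructions: existing $\#\P$-hardness proofs operate in regimes incompatible with $c\leq\poly(\log n)$, and jointly enforcing $d_\lambda d_\mu^d/d_\nu \geq 1/\poly(n)$ may over-constrain the instance family. If a direct reduction proves elusive, the fallback is structural: the absence of a combinatorial interpretation of $a^\nu_{\lambda\mu}$ (Stanley's Problem~9~\cite{Stanley99,Ikenmeyer17}) blocks the dynamic-programming strategy underlying Thm.~\ref{Thm:Kostkas}, so any efficient classical algorithm would have to proceed through substantially different means, such as direct manipulation of $S_n$-QFT matrix elements whose costs scale with $d_\nu$ rather than with the ratio $d_\lambda d_\mu^d/d_\nu$. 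Absent a reduction, this separation in algorithmic resources is the strongest evidence one can currently offer, in analogy with the cryptographic-style hardness assumptions that underpin Shor's algorithm.
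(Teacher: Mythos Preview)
The statement is a \emph{conjecture}, and the paper does not prove it. The surrounding discussion explicitly says the complexity of computing $a^\nu_{\lambda\mu}$ in the regime $|S_c|=\poly(n)$ (i.e.\ $c\sim\log n$) ``remains to our best knowledge open'' and that the authors ``have little intuition whether the conjecture is true or false.'' The only evidence the paper offers is precisely your fallback: Plethysm coefficients have no known combinatorial interpretation (Stanley's Problem~9), so the dynamic-programming strategy of Thm.~\ref{Thm:Kostkas} does not obviously apply; the paper simultaneously warns that the stringent parameter constraints ``can make the restricted computational problem significantly easier.''

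Your proposal is a research outline rather than a proof, and its gap sits exactly where the open problem lives. The first step---``re-engineer'' the Fischer--Ikenmeyer reduction so that hard instances have $c\leq\poly(\log n)$---is not a proof step but a restatement of the conjecture. The known $\#\P$-hardness holds for fixed $d=3$ with $c$ unrestricted; the paper reads this as evidence that complexity is controlled by $c$, and the entire question is whether any hardness survives once $c$ is forced to be polylogarithmic. ``Padding $d$ rather than $c$'' is not known to preserve hardness, and no existing construction produces $\#\P$-hard instances in this window. Until such a reduction exists, your steps two and three (fitting instances into the quantum-efficiency window and invoking Toda's theorem) have nothing to act on. The fallback paragraph is an honest statement of evidence and matches the paper's own stance, but it is evidence for a conjecture, not progress toward a proof.
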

 If true, the quantum algorithm for Plethysm coefficients gives a superpolynomial quantum speedup. The fact that an analog of Thm.~\ref{Thm:Kostkas} does not apply here supports this, but the constraints on the conjectured computationally hard regime for the problem are very stringent and can make the restricted computational problem significantly easier. We have little intuition whether the conjecture is true or false, besides the obstructions proved in \cite{panova2025polynomialtimeclassicalversus} while we were finishing this manuscript.

\textit{Kronecker Coefficients.---}  Given $\lm,\mu,\nu\vdash n$, the Kronecker coefficient $g_{\lm \mu \nu}$ is defined as $g_{\lm \mu \nu} = \mult{r^\lm_{S_n} \otimes r^\mu_{S_n}}{r^\nu_{S_n}}$. The algorithm in Ref.~\cite{bravyi2022quantum} is a special case of Thm.~\ref{theorem_quantum_algo}, for the choice $H=S_n \subseteq G=S_n\times S_n$. It samples from the output distribution: 
\begin{align}
    p_{\lm \mu}(\lm) &= \frac{d_\nu}{d_\lm d_\mu } g_{\lm \mu \nu}\,,
\end{align} and computes $g_{\lm \mu \nu}$ (with high constant probability) in time $\OC \left(n^4  (d_\lm d_\mu / d_\nu)^2\right)$. Similarly to Plethysms, it remains a major open question if the Kronecker coefficients count some simple set of combinatorial objects (Stanley's Prob.~10 \cite{Stanley99}), which again precludes straightforward application of the simulation strategy of Thm.~\ref{Thm:Kostkas}. The classical complexity of computing the Kronecker coefficients was studied in Ref.~\cite{pak2015complexity} which identified the extensive number of parts as a potential obstruction to their efficient computation. They show that $g_{\lm \mu \nu}$ can be computed in time: 
\begin{align}
    \OC \Big(L \log(N) \Big) + \Big(L \log (M) \Big)^{\OC(L^3 \log L)},
    \label{Eq:PakComplexity}
\end{align}
where $L=\max \Big(\ell(\lm),\ell(\mu),\ell(\nu) \Big)$, $N = \max(\lm_1,\mu_1,\nu_1)$ is the largest length of the first row and $M = \max(\lm_2,\mu_2,\nu_2)$ is the largest length of the second row among these partitions \cite[Thm.~3.5]{pak2015complexity}. 
Even for fixed $L$, the algorithms of Refs.~\cite{Christandl12, Baldoni18} compute the Kronecker coefficient in polynomial time but become inefficient when the length of the input partitions varies with input size. The algorithm in Ref.~\cite{mishna2022estimating} computes Kronecker coefficient by reduction to vector partition function problem. All of these algorithms are based on Barvinok's algorithm for counting the lattice points in convex polytopes (Refs.~\cite{Loera2004EffectiveLP, DeLoera2004, DeLorea2005b, DeLoera2005, Barvinok93, BarvinokPomershein99, DyerKannan97}). 

The quantum algorithm in Thm.~\ref{theorem_quantum_algo} is for all practical purposes outperformed by such algorithms on {some} inputs. 
For example, Ref.~\cite[Ex.~3.4]{mishna2022estimating} computes $g_{\lambda \mu \nu} = 391$ for $\lm = [140^2], \; \mu = [70^4],\;  \nu = [57^3,33^3,10]$ for which $d_\nu/(d_\lm d_\mu  ) \approx 1.8 \times 10^{-31}$ on which the quantum algorithm in Thm~\ref{theorem_quantum_algo} requires an order of $10^{31}$ samples. On the other hand, it runs in polynomial when $d_\mu \leq \poly(n)$ and $\lambda = \nu$ is arbitrary. We give an explicit construction of a family of such inputs in the Supp. Mat.. 

We originally took the {length obstruction} in Eq.~\eqref{Eq:PakComplexity} as a support for the conjecture of Ref.~\cite{bravyi2023quantum} that the algorithm for the Kronecker coefficients can outperform classical algorithms for triples of partitions with large number of parts. This lead us to state: 
\begin{conjecture}[Disproved in \cite{panova2025polynomialtimeclassicalversus}]
    There is no polynomial time randomized algorithm that (with high probability) computes $g_{\lm \mu \nu }$ for $\mu, 
    \nu, \lambda \vdash n$ and $d_\mu \leq \poly(n)$.
    \label{Conjecture:Kron}
\end{conjecture}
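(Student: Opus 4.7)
The plan is not to prove the conjecture unconditionally --- that would imply $\BPP \neq \#\P$ --- but to construct the kind of conditional evidence alluded to in the introduction for Shor-style problems. The target is a reduction of the form: \emph{a polynomial-time randomized algorithm for $g_{\lm\mu\nu}$ in the regime $d_\nu \leq \poly(n)$ would yield a randomized polynomial-time algorithm for a problem widely believed to lie outside $\BPP$.} Because the quantum algorithm of Thm.~\ref{theorem_quantum_algo} is efficient precisely in this regime, such a reduction would promote the conjecture into a superpolynomial quantum-classical separation.

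First, I would identify families of Kronecker triples $(\lm,\mu,\nu)$ with $d_\nu \leq \poly(n)$ that are rich enough to support hard instances. The dimension constraint restricts $\nu$ to a narrow slice of Young's lattice: hooks $[n-k,1^k]$ and two-row shapes $[n-k,k]$ with $k=\OC(\log n)$, staircases $[k,k-1,\ldots,1]$, and more generally partitions lying far from the typical Plancherel shape. Such families can still have length $\ell(\nu)$ growing with $n$, placing them in the regime where the length obstruction of \cite{pak2015complexity}, Eq.~\eqref{Eq:PakComplexity}, breaks Barvinok-type polytope counting. The first technical step is to verify that the related algorithms of \cite{Christandl12, Baldoni18, mishna2022estimating} also fail on such families, ruling out all presently-known classical strategies. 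Next, I would attempt to embed a $\#\P$-hard counting problem into restricted Kronecker computation. One natural route exploits the known embeddings of Littlewood-Richardson coefficients into Kronecker coefficients \cite{Christandl07, pak2015complexity}; alternatively one could try to encode permanents or lattice-point counts directly into the Kronecker structure while keeping $d_\nu$ polynomially bounded.

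The hard part will be precisely this dimension control. Standard reductions add dummy parts that inflate $d_\nu$ super-polynomially, and the intersection of ``$\nu$ small-dimensional'' with ``$g_{\lm\mu\nu}$ encodes a hard count'' is a very narrow target --- partial positive results like \cite{Christandl12, Baldoni18, mishna2022estimating} already dispose of many of the naturally structured small-$\nu$ instances. A secondary obstacle is that $\#\P$-hardness does not immediately imply hardness for $\BPP$: converting such a counting lower bound into a randomized lower bound requires either a worst-case-to-average-case argument, or a reduction from a problem (such as discrete logarithm) already conjectured to lie outside $\BPP$. Given these obstructions, I expect that progress will proceed in two stages, first establishing $\#\P$-hardness within the polynomial-$d_\nu$ regime, and only afterwards lifting this to genuine evidence against randomized classical algorithms.
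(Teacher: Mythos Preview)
The statement is a \emph{conjecture}, and the paper does not prove it. What the paper offers in lieu of a proof is heuristic support: it points to the length obstruction of \cite{pak2015complexity} (Eq.~\eqref{Eq:PakComplexity}) and the failure of the Barvinok-based algorithms \cite{Christandl12, Baldoni18, mishna2022estimating} when $\ell(\lm),\ell(\mu)$ grow with $n$; it gives a small numerical study in the Supplementary Material; and it explicitly flags that even the existence of triples with many parts, $d_\nu \leq \poly(n)$, and nontrivial $g_{\lm\mu\nu}$ is not established. There is no reduction, no hardness argument, and no attempt at worst-to-average-case transfer in the paper itself; in the Discussion the authors even acknowledge that the conjecture may well be false, citing Jordan's dequantization of a similar-looking character problem.

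You correctly recognize that an unconditional proof is out of reach and instead sketch a research program aimed at conditional evidence via reductions. That program is strictly more ambitious than anything the paper attempts, and your identification of the two bottlenecks (controlling $d_\nu$ through a reduction, and passing from $\#\P$-hardness to $\BPP$-hardness) is accurate. Two small corrections to your candidate families: hooks $[n-k,1^k]$ and two-row shapes $[n-k,k]$ have dimension $\binom{n-1}{k}$ and roughly $\binom{n}{k}$, which are polynomial only for $k=\OC(1)$, not $k=\OC(\log n)$; and the staircase $[k,k-1,\ldots,1]$ has dimension super-polynomial in $n$, so it does not belong on the list at all. These do not undermine your overall plan, but they illustrate exactly the ``dimension control'' obstacle you yourself flag. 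In short: there is no proof in the paper to compare against, your proposal is a sensible outline of how one might eventually support the conjecture, and it goes well beyond the evidence the paper actually provides.
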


 However, Greta Panova disproved this conjecture by extending the result of Ref.~\cite{Christandl12} by showing that whenever $d_\mu \leq n^k$, there is a $\tilde{\OC}(n^{4k^2+1})$ classical algorithm for the problem \cite[Theorem 1.] {panova2025polynomialtimeclassicalversus}. A more general form of this conjecture (as discussed in \cite{bravyi2023quantum}) narrowly escapes this disproof but requires fine-tuned inputs, so we take Panova's result as a strong evidence against superpolynomial quantum speedups for this problem (see footnote \footnote{There is no polynomial time randomized algorithm that (with high probability) computes $g_{\lm \mu \nu }$ for $\mu, 
    \nu, \lambda \vdash n$ and $d_\nu/(d_\lambda d_\mu) \geq 1/\poly(n)$. The key challenge then becomes to find a nicely parametrized set of inputs such that $d_\nu, d_\lambda, d_\mu$ are all superpolynomial, their ratio is bounded by a polynomial and the partitions do not fall into one of the special cases that can be simulated easily (such as hooks). As long as $\lm, \mu$ have number of parts that scale with $n$, the quantum algorithm can do better than the bound in Eq.~\eqref{Eq:PakComplexity} as well as the state of the art classical algorithms, but the new result in \cite{panova2025polynomialtimeclassicalversus} supersedes this obstruction}).  Interestingly, the gate complexity of our algorithm is ${\OC}(n^{4} (d_\mu d_\lambda/d_\nu)^2)$ so that a ${\OC}(n^{4+2k})$ vs. $\tilde{\Omega}(n^{4k^2+1})$ polynomial speedup for $d_\lambda = d_\nu$ and $d_\mu \leq n^k$ is possible, assuming that the scaling in \cite[Theorem 1.] {panova2025polynomialtimeclassicalversus} is tight. As this asks for additional scrutiny, we pose:
    \begin{question}
    \label{ref:question}
    Is there a classical algorithm that computes $g_{\mu \nu \lambda}$ in time ${\OC}(n^{2k+4})$ whenever $d_\mu \leq n^k$?
    \end{question}
    
    See the Supp. Mat. for an explicit construction of a nontrivial family of inputs on which our algorithm is efficient, additional analysis, discussion and numerical study.

\paragraph{Induction Algorithm.--}
Frobenius's reciprocity~\cite{fulton1991representation}:
\begin{equation}
     \mult{r^\b_H}{ r^\a_G \downarrow_H^G} =\mult{r^\a_G} {r^\b_H \uparrow_H^G}\,,
\end{equation}
suggests an alternative to Thm.~\ref{theorem_quantum_algo} that computes multiplicities by Fourier sampling over induced representations: 

\begin{theorem}\label{theorem_quantum_algo_ind}
Let $H\subseteq G$. If there is a quantum circuits for $G$-QFT with $\mathcal{G}$ many gates, a quantum circuit with $\mathcal{R}$ many gates for preparing $\rho^{\b}_H$ in Eq.~\eqref{Eq:InputState}, then there is a $\OC((\mathcal{G} + \mathcal{R}) (d_\b |G| /d_{\a} |H|)^2)$ quantum algorithm that, given a pair of irrep labels $(\alpha, \beta)$ as an input, computes (with high probability) $\mult{r^\a_G} {r^\b_H \uparrow_H^G}$. 
\end{theorem}

\begin{proof}
See Fig.~\ref{fig_2:induction}.
    The algorithm uses two registers, an $H$-regular rep register $\mathcal{S}=\mbb{C}[H]$ and a coset register $\mathcal{V}=\mbb{C}[G/H]$. It  prepares $\rho^\beta_H$ from Eq.~\eqref{Eq:InputState} in $\mc{S}$ and $|H|\id/|G| $ in $\mathcal{V}$ and applies an `embedding' unitary $U_E$ to $\mc{S}\otimes \mc{V}$: 
$    
        U_E \Big(\ket{t}_\mathcal{V} \otimes \ket{h}_\mathcal{S} \Big) = \ket{th},
$
    where $t$ is a representative of the coset of $H$ in $G$ and $th$ is an element of $G$. This unitary can be implemented efficiently as long as the multiplication of the group elements in $G$ is efficient an we assume that it's complexity is negligible compared to both $\mathcal{G}$ and $\mathcal{R}$. The algorithm  applies a $G$-QFT to $\mc{S}\otimes \mc{V}$ and measures the irrep label. The outcome $\alpha$  appears with probability: 
    \begin{align}
        q_\beta(\alpha) &= \frac{|H|d_\alpha}{|G| d_\beta} \mult{r^\alpha_G}{r^\beta_H \uparrow_H^G}.
        \label{Eq:Induced_Distribution}
    \end{align}
     Following the analysis in Thm.~\ref{theorem_quantum_algo}, we can find $ \mult{r^\alpha_G}{r^\beta_H \uparrow_H^G}$ in $\OC((\mathcal{G} + \mathcal{R}) (d_\b |G| /d_{\a} |H|)^2)$ with high probability. See Supp.~Mat. for details.
\end{proof}

\begin{figure}[h]
\centering
\includegraphics[width=\columnwidth]{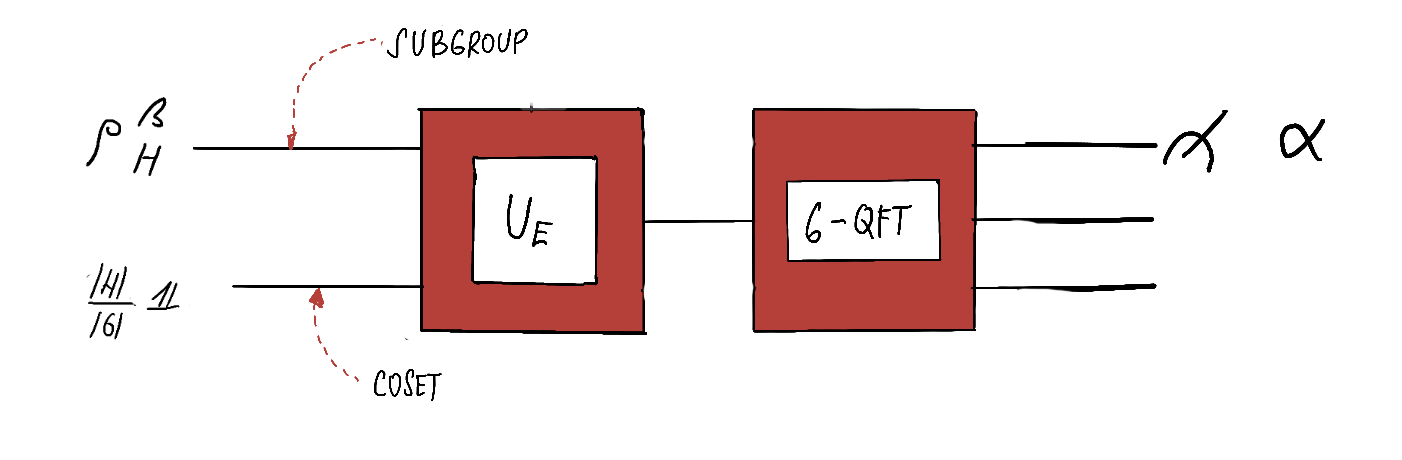}
\caption{The quantum algorithm in Thm.~\ref{theorem_quantum_algo_ind}. The algorithm applies the embedding unitary $U_E$ to the subgroup and coset register and weak Fourier samples.}
\label{fig_2:induction}
\end{figure}

The sampling cost of the restriction algorithm in Thm.~\ref{theorem_quantum_algo} is $ d_\a / d_\b $, compared to $(d_\b|G|)/(d_\a|H|)$ in Thm.~\ref{theorem_quantum_algo_ind} and the restriction is preferable whenever
$ \label{eq_cost_compare}
    (d_\a/d_\b)^2 < |G|/|H|$. For $\beta = \triv$, Eq.~\eqref{Eq:Induced_Distribution} is the distribution of Weak Fourier sampling from the hidden subgroup state, appearing in the context of the Hidden Subgroup Problem (HSP). The computational task in HSP is of course different from  computing the multiplicity, but the connection suggests generalizations of HSP to $\b \neq \triv$ we explore in future work.

\paragraph{Discussion.--}

We gave quantum algorithms for Kostka, Littlewood-Richardson, Kronecker and Plethysm coefficients, efficient on a nontrivial subset of inputs. 
We gave a polynomial-time classical algorithm for computing of the Kostka numbers $K^\mu_\nu$ whenever $d_\nu \leq \poly(n)$, ruling out a superpolynomial quantum speedup for this problem and conjectured the existence of a similar classical algorithm for the Littlewood-Richardson coefficients. 
We explained why such classical algorithm does not straightforwardly extend to the Plethysm and Kronecker coefficients and conjectured that our quantum algorithms efficiently solve problems that are hard for classical algorithms. Going against our thesis is a result in Ref.~\cite{jordan2008fast} in which Jordan presented a quantum algorithm for approximating $S_n$ characters to a similar precision as our Thm.~\ref{theorem_quantum_algo} and gave an efficient classical algorithm based on the Roichman's rule \cite{Roichman1999}. Our Conjecture \ref{Conjecture:Kron} followed a similar fate and was disproved by an algorithm in \cite[Theorem 1.]{panova2025polynomialtimeclassicalversus}. The runtime of this algorithm is $\tilde{\OC} (n^{4k^2 +1})$ which, if tight, suggests a possible ${\OC}(n^{4+2k})$ vs $\tilde{\Omega}(n^{4k^2+1})$ speedup on inputs with $d_\nu = d_\lambda$ and $d_\mu \leq n^k$. It is thus important to address Question \ref{ref:question} and understand if the large polynomial gap for computing Kronecker coefficient closes. Conjecture~\ref{Conj:Plethysms} was put under similar scrutiny: there are significant limitations to regimes where the algorithm can offer a superpolynomial speedup, but large polynomial speedups may be still possible. 

We note that, Ref.~\cite{crichigno2024quantum} showed that some of the multiplicity coefficients considered here appear naturally in spin chains. It remains an interesting question if this approach could provide more insight into this class of quantum algorithms.

\paragraph{Acknowledgments.--} We would like to thank David Gosset, Sergey Bravyi, Louis Schatzki, Greta Panova, Marco Cerezo, Marcos Crichigno, Samuel Slezak, Diego Garcia-Mart\'in, Frederic Sauvage and Dmitry Grinko for discussions. Last but not least, we would like to thank an anonymous QIP referee who helped us to significantly improve the presentation of this paper, especially regarding the presentation of the correctness of the $S_n$-QFT - the previous version contained many mistakes. We had decided to rewrite that appendix altogether and if you ever get a chance to read this, we are seriously grateful for the detailed feedback and sorry for not updating our manuscript earlier. M.L.
was supported by the Center for Nonlinear Studies at
LANL and by the Laboratory Directed Research and
Development (LDRD) program of LANL under project
number 20230049DR.

\bibliography{cmd/biblio}

\clearpage
\newpage
\onecolumngrid

\setcounter{lemma}{0}
\setcounter{proposition}{0}
\setcounter{theorem}{0}

\section{Supplemental Material}

\subsection{Representation Theory: Quick Reference}
We provide a quick reference for most of the concepts from representation theory that we use in the article. We refer the reader to the books of Fulton and Harris~\cite{fulton1991representation}, Goodman and Wallach~\cite{goodman2009symmetry} or Serre~\cite{serre1977linear} and Sagan's book~\cite{sagan2001symmetric} and James's lecture notes~\cite{James84} for Representation Theory of the Symmetric Group. A \textbf{group} is a set $G$ together with an \textit{associative} operation $\circ$ under which a given element $e\in G$ is the identity $e\circ g = g\circ e = g$, and for each $g\in G$ there is an inverse $g^{-1}\in G$ such that $g\circ g^{-1}=e$. Some examples are:
\begin{itemize}
   \item  The \textit{symmetric group} is the set of all permutations of $n$ elements, e.g.
\begin{equation}
   S_3 = \{ (), (1,2),(1,3),(2,3),(1,2,3),(1,3,2) \}\,.
\end{equation}
\item The \textit{special unitary group} of degree $d$, $\SU(d)$, is the set of all $d\times d$ matrices with unit determinant. It is a subgroup of $\GL(d)$, the space of invertible linear operators on $\mbb{C}^d$.
\end{itemize}

The action $\circ$ of groups can be extended to vector spaces $\HC$, if we map each group element $g$ to a linear operator $R(g)$ on $V$, provided we ask $R$ to preserve the group structure $R(g) R(h) = R(g\circ h)$ for all $g,h$. A representation of $G$ on $\HC$ is a map:
\begin{equation}
R:G \xrightarrow{} \GL(\HC),
\end{equation}
that satisfies $R(g) R(h) = R(g\circ h)$, called the \textit{group homomorphism} condition. We will oftentimes also call $\HC$ the representation of $G$, being implicit about $R$. More generally, the vector spaces $\HC$ that allow for the action of $G$ are called $G$-modules.

Let $R$ be a representation for a group $G$ , with $\HC$ the corresponding $G$-module.
Assuming $G$ to be finite or compact, Maschke's theorem states that any representation can be expressed in terms of its decomposition into \textit{irreducible representations} (or irreps) 
\begin{equation}
R \cong \bigoplus_{\lm \in \hat{G}} I_{{\mult{\lm}{R}}} \otimes r^\lm_G\,,\quad \text{and}\quad   \HC \cong \bigoplus_{\lm \in \hat{G}} \mbb{C}^{\mult{\lm}{R}} \otimes \mc{H}^\lm\,.
\end{equation}
Here, $\hat{G}$ is the set of all inequivalent irreps of $G$, $\lm$ labels a given irrep, $\HC^\lm\subset \HC$ are irreducible modules and $r^\lm_G:G\arr {\rm End}(\HC^\lm)$ the irreducible group homomorphisms. We will call the subspace collecting all irreps of same type $\lm$, $\HC_\lm= I_{m_\lm} \otimes \HC^\lm \subset \HC$, the (maximal) \textit{isotypic subspace} of type $\lm$ in $\HC$.

In order to introduce the Representation Theory of $S_n$ and $\SU(d)$, we have to introduce integer partitions, Young diagrams and Young tableaux.
A \text{partition} of a positive integer \( n \) is a way of writing \( n \) as a sum of positive integers in weakly decreasing order, where the order of the addends does not matter. Formally, a partition is represented as a tuple:
\begin{equation}
   \lambda = (\lambda_1,\lambda_2,\cdots,\lambda_d) \vdash n \,\quad \text{ with } \lambda_1 \geq \lambda_2 \geq \cdots \geq \lambda_d > 0 \text{ and } \sum_{i=1}^{d} \lambda_i = |\lm| = n.
\end{equation}
Here, $l(\lm)=d\leq n$ is the number of parts in $\lm$, also called its \textit{length}. For example, the partitions of $n = 4$ are $4$, $3+1$, $2+2$, $2+1+1$ and $1+1+1+1$. Their lengths are $1,2,2,3$ and $4$, respectively.
We will use the shorthand $(\cdots,k^{m_k},\cdots)$, where $m_k$ is the multiplicity of $k$ in $\lm$, to denote partitions more compactly. For example, $(1,1,1,1)=(1^4)\vdash 4$ and $(4,3^2,1^5)\vdash 15$. 
The set of all partitions of $n$ is denoted $\PC_n$ and the subset of partitions with length at most $d$ parts is denoted $\PC_{n,d}\subset \PC_n$. We note that sometimes it will be convenient to pad partitions with $0$'s. In any case, $\lm$ and $\lm+(0^{m_0})$ (for any positive integer $m_0$) denote the same partition.

\medskip
Given a partition $\lm \vdash n$ (the symbol $\vdash$ is used to denote partitions), its associated Young diagram consists of a diagram with $k$ rows and $\lm_k$ left-justified boxes on the $k$-th row. For example, the Young diagram for $\lm=(3,1)$ is  \ydiagram{3,1}. A tableau is a filling of a Young diagram with integers. \\

A \textit{Standard Young Tableau} (SYT) is a filling of a Young diagram with positive integers in a strictly increasing manner, both row and column-wise. 
Examples of SYTs with shape $\lm=(3,1)$ are
\begin{equation}\label{eq_SYT}
\ytableausetup{boxsize=normal,aligntableaux=center}
\begin{ytableau}
   1 & 2 & 3 \\
   4
\end{ytableau}
\hspace{.5cm} 
and
\hspace{.5cm} 
\begin{ytableau}
   1 & 2 & 4 \\
   3
\end{ytableau}\,\quad.
\end{equation}

A \textit{Semi-Standard Young Tableau} (SSYT) is a filling of a Young diagram with non-decreasing positive integers across each row and strictly increasing down each column.
Examples of SSYTs with shape $\lm=(3,1)$ are
\begin{equation}\label{eq_SSYT}
    \ytableausetup{boxsize=normal,aligntableaux=center}
    \begin{ytableau}
       1 & 1 & 1 \\
       2
    \end{ytableau}
    \hspace{1cm}
    \begin{ytableau}
       1 & 1 & 2 \\
       2
    \end{ytableau}
    \hspace{1cm} 
    \begin{ytableau}
       1 & 2 & 2 \\
       2
    \end{ytableau}\quad.
\end{equation}
Let ${\rm SYT}(\lm)$ and ${\rm SSYT}(\lm)$ be, respectively, the sets of SYTs and SSYTs of shape $\lm$. The \textit{content} of a skew Tableau is a sequence counting the number of appearances of each integer. For example, the previous SSYT have content $\mu=(3,1),(2,2)$ and $(1,3)$, respectively. Integer partitions $\lm\vdash n$ label $S_n$ and $\SU(d)$ irreps,
\begin{align}\label{eq_specht}
r^{\lm}_{S_n}:S_n\arr \U(\mc{S}^\lm) \quad \text{and} \quad r^{\lm}_{\SU(d)}:\SU(d)\arr \U(\mc{W}^\lm)\,,
\end{align}
with $\mc{S}^\lm$ and $\mc{W}^\lm$ the irreducible modules (called Specht and Weight modules, respectively), while standard and semistandard Young tableaux of shape $\lm$ label orthonormal bases for such irreducible modules:
\begin{align}\label{eq_weight_mod}
\mc{S}^\lm = \spn \Big\{ \text{ $\ket{T}$\,,\, $T\in{\rm SYT}(\lm)$} \Big\}\,\quad \text{and}\quad    \mc{W}^\lm = \spn \Big\{ \text{ $\ket{q}$\,,\, $q\in{\rm SSYT}(\lm)$} \Big\}\,.
\end{align}
In the case $\lm=(1,0^{d-1})$, the $\SU(d)$ irrep $r^\lm_{\SU(d)}$ is called the \textit{standard representation} and corresponds to the representation of $d\times d$ special unitary matrices \textit{as themselves}. In the case of $S_n$, there is no standard representation since the group is not a matrix group. Instead, we can mention the \textit{defining representation} of any permutation $\sg \in S_n$, consisting of its action on $\mbb{C}^n$ by permuting the canonical basis $\sg\cdot \ket{i} = \ket{\sg^{-1}(i)}$, or its \textit{regular representation}, where the module is the group algebra $\mbb{C}[S_n]$ and the action is the one inherited from the group operation $\sg\cdot\ket{\pi}=\ket{\sg\cdot \pi}$. None of these is irreducible.

A \textit{Yamanouchi word} (also known as a \textit{ballot word}) of length $n$ in the alphabet $[k]\equiv\{1,2,\ldots,k\}$ is a word

\begin{equation}\label{eq_Yamanouchi}
w = w_1w_2\cdots w_n \quad \text{with each }w_i\in [k]
\end{equation}
with the property that for every prefix $w^j=w_1w_2\cdots w_j$, for any $j \in [n]$, and for each $i\in[k]$, the number of occurrences of the letter $i$ is at least as many as that of $i+1$. Let the set of all $n$ $k$-strings be $\mc{X}_{n,k}$, where of course $|\mc{X}_{n,k}|=[k]^n$, and the Yamanouchi Words (YW) constitute a strict subset. The Yamanouchi condition can be compactly stated by introducing the \textit{type} of any $x\in \mc{X}$ as 
\begin{equation}
t(x)=(t_1(x),\cdots,t_k(x))
\end{equation}
where $t_i(x)$ is the number of $i \in [k]$ in $x$. Of course, $\sum_i t_i(x)=n$. For example, for $x=11212\in \mc{X}_{5,2}$ we have $t(11212)=(3,2)$. The Yamanouchi condition is
\begin{equation}
    t_i(w^j) \geq t_{i+1}(w^j) \quad \text{for all $j\in[n]$}\,.
\end{equation}
Equivalently, one is asking the type of every prefix string to be a partition. Lets check if $x$ is a YW:
\[
\begin{array}{rcl}
x^1=x_1=1\,, &\quad & t(x^1)=(1) \vdash 1\\[1mm]
x^2=x_1x_2=1\,1 \,,&\quad & t(x^2)=(2) \vdash 2\\[1mm]
x^3=x_1x_2x_3=1\,1\,2 \,,&\quad & t(x^3)=(2,1) \vdash 3\\[1mm]
x^4=x_1x_2x_3x_4=1\,1\,2\,1 \,,&\quad & t(x^1)=(3,1) \vdash 4\\[1mm]
x^5=x_1x_2x_3x_4x_5=1\,1\,2\,1\,2 \,,&\quad & t(x^1)=(3,2) \vdash 5\\[1mm]
\end{array}
\]
Thus, $x$ is a Yamanouchi word. Let $\YC_n$ be the set of YW of length $n$ with entries in $n$, and let $\YC_n^\lm$ be the subset whose type is $\lm$. YW and SYTs are bijectively related.
\begin{proposition}
For every $\lm \vdash n$, there is a bijection between the sets ${\rm SYT}(\lm)$ and $\YC_n^\lm$.
\end{proposition}
\begin{proof}
One can map a word $x=x_1\cdots x_n$ with $t(x)=\lm$ to a $T\in {\rm SYT}(\lm)$ in the following way: each $x_j$ tell you in which row of the SYT to place the integer $j$. Start with a box with number $1$ which corresponds to $x_1=1$ (only possibility), then place a box with number $2$ in row $x_2$, then place a box with the number $3$ in row $x_3$, and so on. Let us illustrate with our previous example:
\[
\begin{array}{rcl}
x^1=1&\quad\arr\quad & \ntableau{1}  \\[3mm]
x^2=11&\quad\arr\quad & \ntableau{1&2}  \\[3mm]
x^3=112&\quad\arr\quad & \ntableau{1&2\\3}  \\[3mm]
x^4=1121&\quad\arr\quad & \ntableau{1&2 & 4\\3}  \\[3mm]
x=x^5=11212&\quad\arr\quad & \ntableau{1&2 & 4\\3&5}  \\[3mm]
\end{array}
\]

\end{proof}

The relevance of this bijection to us will become clear in next section, where we explain how to build a quantum circuit compiling $S_n$-QFT.

\medskip
\medskip
\medskip

We can obtain a characterization of $S_n$-irreps $r^\lm_{S_n}(g)$ in the following way. First, note that $S_n$ is generated by adjacent transpositions $S_n = \langle \{\sg_i=(i,i+1)\}_{i\in[n-1]} \rangle$. Thus, any $g\in S_n$ can be decomposed into a sequence of adjacent transpositions, i.e., as $g=\prod_{k=1}^N \sg_{i_k}$, and
\begin{equation}
    r^\lm(g) =\prod_{k=1}^N r^\lm(\sg_{i_k}) 
\end{equation}
where $N$ is at most $\binom{n}{2}\in\OC(n^2)$. To see this, think of “sorting” a permutation using adjacent swaps (as in bubble sort). Each adjacent swap removes at most one inversion. The worst-case scenario is the reverse permutation, which has $\binom{n}{2}$ inversions. Instead, the $k$-cycle $(1,2,\cdots,k)$ can be written as $(1,2)\cdots (k-1,k)$, i.e., using $k-1$ adjacent transpositions -- so cycles require $\OC(n)$ adjacent transpositions in worst case.

Once we've decomposed into adjacent transpositions $\sg_i$, we can leverage the fact that their action on irrep basis elements $\ket{T}\in\mc{S}^\lm$ is sparse (see e.g., Eq.~28 in Ref.~\cite{grinko2023gelfand}):
\begin{equation}\label{eq_sn_on_T}
        r^{\lm}_{S_n}(\sg_i) \ket{T} = \frac{1}{\Delta_i(T)}\ket{T} + \sqrt{1-\frac{1}{\Delta_i(T)}^2} \ket{\sg_i \cdot T}
\end{equation}
in terms of the axial distance $\Delta_i(T)= {\rm cont}_{i+1}(T)-{\rm cont}_{i}(T)$, where ${\rm cont}_{i}(T)={\rm col}_i(T)-{\rm row}_i(T)$ is the content of cell $i$ in $T$, the difference between the column and row indexes of the box with $i$ in $T$.

For our purposes, it will be handy to have a re-derivation of Eq.~\eqref{eq_sn_on_T} in terms of Yamanouchi words. The row index of $i$ in $x$ is given by ${\rm row}_i(x)=x_i$, since the letter $x_i$ directly tells you the row number where $i$ should be placed. The column index of $i$ in $x$ is given by ${\rm col}_i(x)=t_i(x^i)-x_i$. To see this, note that the tableau is filled in so that within each row the entries are strictly increasing from left to right. This means that when you place the number $i$ in row $x_i$, you put it in the leftmost free (i.e., next available) box in that row. In other words, the column in which $i$ is placed is determined by how many times the row $x_i$ has already been used for earlier entries. Finally, the content of $i$ in $x$ is given by
\begin{align}
    {\rm cont}_i(x) &= {\rm col}_i(x) - {\rm row}_i(x) \\
    &= t_i(x^i) - 2x_i\,,
\end{align}
and the axial distance is
\begin{align}
    \Delta_i(x) &= {\rm cont}_{i+1}(x) - {\rm cont}_i(x)  \\
    &= (t_{i+1}(x^{i+1})-t_i(x^i)) + 2(x_i-x_{i+1})\,.
\end{align}
We thus have
\begin{equation}\label{eq_sn_on_x}
        r^{\lm}_{S_n}(\sg_i) \ket{x} = \frac{1}{\Delta_i(x)}\ket{x} + \sqrt{1-\frac{1}{\Delta_i(x)}^2} \ket{\sg_i \cdot x}
\end{equation}
where $\ket{\sg_i\cdot x} = \ket{x_1 \cdots x_{i+1}x_i \cdots x_n}$. Note that $\sg_i\cdot x$ could, or could not, be a Yamanouchi word. In case its not, the coefficient $\sqrt{1-\frac{1}{\Delta_i^2}}$ will vanish. We will typically encounter the situation where, given a Yamanouchi word $x$ of type $\lm\vdash n$, we are asked to produce the unique Yamanouchi word $x_\mu$ of type $\mu \in \lm+ \Box$ with prefix $x_\mu^n = x$. This is illustrated by the following example:
\begin{example}
  Let $\lm = (3,2)$ and $\mu = (3,3) \in \lm +\Box$. Then: \begin{align} 
  x = 12112  &\mapsto x' = 121122, &
  x = 12121  &\mapsto x' = 121212  \,.
  \end{align}
\end{example}

\subsection{The hook-length formula}
The dimension $d_\lambda$ of an $S_n$ irreducible representation $\lambda$ can be computed in time $\OC(n^2)$ using the hook-length formula. It states that: 
\begin{align}
    d_\lambda &= \frac{n!}{\prod_{(i,j) \in \lambda} h_\lambda(i,j)},
    \label{Eq:hooklength}
\end{align}
where the product runs over all boxes in $\lambda$ and $h_\lambda(i,j)$ is the hook-length of the $(i,j)$-th box: the number of boxes to the right of the box at position $(i,j)$ in $\lambda$ plus the number of boxes below $(i,j)$ plus $1$. Although we state the runtime of our algorithm in terms of $S_n$ irrep dimension ratios -- unfortunately, there is not closed analytic formula for the runtime as a function of $n$ --, we remark that given some input partitions one should be able to use Eq.~\eqref{Eq:hooklength} to efficiently estimate the number of samples needed and therefore the total runtime of the algorithm.

\subsection{Quantum Fourier Transforms over $S_n$, $S_\mu$ and $S_a \wr S_b$}

\paragraph{$S_n$-QFT.--} See Refs.~\cite{Diaconis90, Clausen89} for the classical algorithm and Ref.~\cite{beals1997quantum} for the quantum algorithm. A $S_n$-QFT algorithm is also derived in Ref.~\cite{kawano2016quantum}. Beals' approach is also described in the appendix of Ref.~\cite{bravyi2023quantum} and the following description should complement that exposition. The algorithm crucially uses Young's rule and subgroup-adaptation of the Young's Orthogonal Basis:
\begin{fact}[Young's Rule and Young's Orthogonal Basis~\cite{James84}]
Let $r^\lambda_{S_n}$ be an irreducible representation of $S_n$. Its restriction to $S_{n-1}$ contains $S_{n-1}$-irreducible representations with multiplicity at most $1$ in the following way: 
\begin{align}
    U \left( r^{\lambda}_{S_n}(h) \downarrow^{S_n}_{S_{n-1}} \right) U^\dag &= \bigoplus_{\nu \in \lambda - \Box} r^{\nu}_{S_{n-1}}(h), \, \forall h \, \in S_{n-1}.
    \label{Eq:branching}
\end{align}
Here $\nu - \Box$ is the set of irreducible representations of $S_{n-1}$ that can be obtained by deleting one box from the $\nu$ (seen as a Young diagram) in all permissible ways. Moreover, there exists a basis for $r^\lambda_{S_n}$ and $r^\nu_{S_{n-1}}$ such that $U = I$ in Eq.~\eqref{Eq:branching}: such basis is called the Young's Orthogonal Basis and the property is called subgroup-adaptation. 
\label{lemma:youngs_rule}
\end{fact}

\begin{example} An irreducible representation $r^{(3,2,1)}_{S_6}$ restricted to $S_5$ branches into $\nu \in \lbrace (2,2,1), (3,1,1), (3,2) \rbrace$ irreps of $S_5$.  This means that in Young's orthogonal basis: \begin{align} r^{(3,2,1)}_{S_6} \downarrow^{S_6}_{S_5} = r^{(2,2,1)}_{S_5} \oplus r^{(3,1,1)}_{S_5} \oplus r^{(3,2)}_{S_5}. \end{align}
\end{example}
Fixing a coset representative
$t \in S_n/S_{n-1}$, there is a unique $h \in S_{n-1}$ such that $g = t h$. Fact~\ref{lemma:youngs_rule} then implies that: 
\begin{align}
r^{\lambda}_{S_n}(t h) = r^{\lambda}_{S_n}(t) r^{\lambda}_{S_{n}}(h) = r^{\lambda}_{S_n}(t) \left[r^{\lambda}_{S_{n}}(h) \downarrow^{S_n}_{S_{n-1}}\right] = r_{S_n}^{\lambda}(t) \left[ \bigoplus_{\nu \in \lambda - \Box} r^{\nu}_{S_{n-1}}(h) \right],
\label{App:FourierFirstStep}
\end{align}
One can recurse this as: 
\begin{align}
    r^{\lambda_n}_{S_n}(g) &= r^{\lambda_n}_{S_n}(t_n) \left[ \bigoplus_{\lambda_{n-1} \in \lambda_{n}-\Box} r^{\lambda_{n-1}}_{S_{n-1}}(t_{n-1}) \right] \left[\bigoplus_{\lambda_{n-2} \in \lambda_{n-1} - \Box} r^{\lambda_{n-2}}_{S_{n-2}}(t_{n-2}) \right] \ldots \left[ \bigoplus_{\lambda_{1} \in \lambda_{2}-\Box} r^{\lambda_1}_{S_1}(t_1) \right], 
    \label{App:Fourier}
\end{align}
for $g = t_n t_{n-1} \ldots t_1$ and $r^{\lambda_1}_{S_1}(t_1) = 1$. This recursion is at the heart of the algorithm which we now describe in detail.

\paragraph{$S_n$-QFT algorithm.} Let $f:S_n\arr \mbb{C}$ and define, for any $j\in[n]$, $F^j:S_n\arr \mbb{C}$ such that
\begin{equation}\label{eq_Fj}
    F^j(g)=
    \begin{cases}
        f(g) \text{ if $g\in t_n^j S_{n-1}$}\\
        0 \text{ else,}
    \end{cases}
\end{equation}
where $t_n^j$ is a transversal for the $j$-th coset of $S_{n-1}$ on $S_n$. We denote such space of cosets $T_n\equiv S_n/S_{n-1}$.
Using the set $\{ F_j \}_j$ in Eq.~\eqref{eq_Fj} we can write $\ket{f}=\sum_{g\in S_n} f(g) \ket{g} = \sum_{j=1}^n \ket{F^j}$. Let us also define, for each $j\in[n]$, $f^j_{n-1}:S_{n-1}\arr\mbb{C}$ such that $f^j_{n-1}(h)=f(t_n^jh)$ for all $h \in S_{n-1}$. 

Consider a \textit{transversals register} $\HC_{T_n} = \spn\{ \ket{t_n^1},\cdots,\ket{t_n^n},\ket{*}\}$, where we use the cycles $t_n^i\equiv(i,i+1,\cdots,n-1,n)$ as a choice of transversals for the cosets of $S_{n-1}$ in $S_n$, and where $\ket{*}$ is a dummy state. Once a choice of transversals is made, any $g\in S_n$ has a unique decomposition $g=t_n t_{n-1}\cdots t_2$, where $t_n\in T_n$, $t_{n-1}\in T_{n-1}$ and so on. Note the transversals register $\HC_{T_n}$ is $n+1$-dimensional, and denote $\HC_{T_n}^{S_n/S_{n-1}}$ the $n$-dimensional subspace that excludes the span of the dummy state: $\HC_{T_n}^{S_n/S_{n-1}} = \spn\{ t_n^j\}_{j\in[n]} \cong \mbb{C}[ S_n/S_{n-1} ]$. Consider the composition of multiple transversal registers for the tower $S_n\supset S_{n-1} \supset \cdots \supset S_2$ into
\begin{equation}
\HC_{T^n} = \HC_{T_n} \otimes \HC_{T_{n-1}}\otimes \cdots \otimes \HC_{T_2}
\end{equation}
and note that it contains a subspace $\HC_{T^n}^{S_n} = \HC_{T^n}^{S_n/S_{n-1}} \otimes \cdots \otimes \HC_{T^2}^{S_2/S_1} \cong  \mbb{C}[S_n] \subset \HC_{T^n}$. That is, any $f:S_n\arr\mbb{C}$ can be encoded as a vector $\ket{f}_{T^n} = \sum_g f(g) \ket{g}_{T^n} \in \HC_{T^n}^{S_n}$, with each $\ket{g}_{T^n} = \ket{t_n}_{T_n} \otimes \cdots \otimes \ket{t_2}_{T_2}$ corresponding to its unique factorization $g=t_n\cdots t_2$.

Next, consider the following \textit{word registers} $\HC_{W_j} = \HC_{X_j} \otimes \HC_{Y_j} = \spn\{ \ket{x_j} \otimes \ket{y_j} \}_{x_j,y_j = 0}^j\,$
where again $x_0=y_0=*$ are meant to be dummy registers, such that the composition of word registers 
\begin{equation}
\HC_{W^n} = \HC_{W_n}  \otimes \HC_{W_{n-1}}\otimes \cdots \otimes \HC_{W_2}
\end{equation}
also contains a subspace $\HC_{W^n}^{S_n}$ isomorphic to $\mbb{C}[S_n]$, given by the span of all valid Yamanouchi words (see Eq.~\eqref{eq_Yamanouchi}):
\begin{equation}
\HC_{W^n} \supset \HC_{W^n}^{S_n} =\spn\{ \ket{x,y}\}_{x,y\in \YC_n}  \cong \mbb{C}[S_n]\,.
\end{equation}
In addition, define the subspace $\HC_{W^n}^{S_{n-1}} = \spn \{ \ket{x^{n-1},y^{n-1}}_{W^{n-1}}\otimes \ket{*,*}\}_{x,y\in \YC_n} \cong \mbb{C}[S_{n-1}] \subset \HC_{W^n}$ and note that $\HC_{W^n}^{S_{n-1}}$ and $\HC_{W^n}^{S_{n}}$ are orthogonal. Finally, for each $t_n^i\in T_n$, let
\begin{equation}\label{eq_Hn}
\HC_i = \spn \Big\{ \ket{ \widehat{t_n^i h}}_{W^n} \, ,\quad h\in T^{n-1}\cong S_{n-1} \Big\} \cong \mbb{C}[S_{n-1}] \subset \HC_{W^n}^{S_{n}} \,,
\end{equation}
where $\HC_i \cong \mbb{C}[S_{n-1}]$ and  
\begin{equation}
\ket{\,\widehat{g}\,}_{W^n} = V_{{\rm QFT}_n} \ket{g}_{T^n} =  \sum_{\lm\vdash n} \sqrt{\frac{d_\lm}{n!}}\sum_{x,y \in \YC^\lm_n} [r^\lm_{S_n}(g)]_{x,y}  \ket{x,y}_{W^n}\,,
\end{equation}
Here, $V_{{\rm QFT}_n}: \HC_{T^n}^{S_n} \arr \HC_{W^n}^{S_n}$ is an isometry implementing the desired Fourier transform over $S_n$. Note that these $\HC_i$ subspaces are isomorphic to $\HC_{W^n}^{S_{n-1}}$, and orthogonal to it.

We prove by induction the $S_n$-QFT construction works. 
Starting with an arbitrary $\ket{f} =  \sum_{i=1}^n \ket{t_n^i}_{T_n} \otimes \ket{f^{n-1}_i}_{T^{n-1}}  \in \HC_{T^n}^{S_n}$, we assume access to $V_{{\rm QFT}_{n-1}}$ and use it on $\ket{f}$ to prepare $\sum_{i=1}^n \ket{t_n^i}_{T_n} \otimes \ket{ \hat{f}^{n-1}_i}_{W^{n-1}}$. Following, we tensor with an ancillary word register $\HC_{W_n}$ initialized in its dummy state $\ket{**}_{W_n}$ to get
\begin{equation}
\ket{\phi_0} = \sum_{i=1}^n \ket{t_n^i}_{T_n} \otimes \ket{ \hat{f}^{n-1}_i,**}_{W^{n}}\,.
\end{equation}

For each $l \in [n]$, consider the unitary operation $Z_l=R_l V_E S_l V_E R_l \in \U(\HC_{T_n}\otimes\HC_{W^n})$, where $R_l$, $V_E$ and $S_l$ are specified in Definition~\ref{box_ops}. The following lemma considers applying the sequence $Z_n\cdots Z_1$ on $\ket{\phi_0}$. 
\begin{lemma}[From Ref.~\cite{bravyi2023quantum}]\label{lemma1}
The state after $Z_l\cdots Z_1$,  
$\ket{\phi_l} \equiv Z_l \cdots Z_1 \ket{\phi_0}$, is
\begin{equation}
    \ket{\phi_l} = \sum_{j\in [1,l-1]} \ket{*}_{T_n} \otimes \ket{\hat{F}_j}_{W^n} + \ket{t_n^l}_{T_n}\otimes \ket{ \hat{f}^{n-1}_l, **}_{W^n} + \sum_{j\in[l+1,n]} \ket{t_n^j}_{T_n}\otimes \ket{ \hat{f}^{n-1}_j, **}_{W^n}\,.
\end{equation}
\end{lemma}

Having established that $Z_l \cdots Z_1 \ket{\phi_0}$ produces $\ket{\phi_l}$, we conclude that $Z_n \cdots Z_1$ produces
\begin{equation}
\ket{\phi_{n}} = \sum_{i=1}^n \ket{*}_{T_n} \otimes \ket{\hat{F}_i}_{W^n} = \ket{*}_{T_n}\otimes \ket{\hat{f}}_{W^n}\,.
\end{equation}
We return the register $W^n$ which contains the desired $\ket{\hat{f}}$. A sketch of the circuit implementation is provided in Fig.~\ref{fig:sketch}.

\paragraph{Details and Proof of Lemma~\ref{lemma1}.} We now give details on the iteration and then follow with the proof of the lemma.
\begin{definition}\label{box_ops}
Consider the following operations:
\begin{itemize}
\item For any $\pi \in S_n$, let $R(\pi)$ be the unitary in $\U(\HC_{W^n})$ acting nontrivially only in the subspace $\HC^{S_n}_{W^n}$, where its action on a basis $\{\ket{x,y}\}_{x,y\in \YC_n}$ of $\HC^{S_n}_{W^n}$ is
\begin{align}
    R(\pi)\ket{x,y}_{W^n} &= r^\lm(\pi)\ket{x,y}_{W^n} = \ket{\pi\cdot x,y}_{W^n}\,.
\end{align}
where $\lm$ is the type of $x$ (and of $y$). The action $\pi \cdot x$ of a permutation $\pi$ on a word $x$ is described in the previous section. We will use the shorthand $R_i \equiv R(t_n^i)$ for the transversal $t_n^i$, in which case we have that $R_i$ is an involution -- it is own inverse, $R_i\ad=R_i$ and thus $R_i^2 = I_{W^n}$. Let us stress that $R(\pi)$ acts trivially outside of $\HC_{W^n}^{S_n}$, in particular on $\HC_{W^n}^{S_{n-1}}$:
\begin{equation}
    R(\pi)  \Big( \ket{\hat{f}^{n-1}_i}_{W^{n-1}}\otimes \ket{**}_{W_n} \Big) = \ket{\hat{f}^{n-1}_i}_{W^{n-1}}\otimes \ket{**}_{W_n}\,.
\end{equation}
That is, one can use a control on the $W_n$ register not being $\ket{**}$. The natural way to implement $R_i$ is by decomposing $t_n^i$ into adjacent transpositions as 
\begin{equation}
t_n^i =(i,i+1,\cdots,n-1,n) = (i,i+1)(i+1,i+2)\cdots (n-1,n)\,.
\end{equation}
Thus, we write $t_n^i$ in terms of $n-i-1$ adjacent transpositions each of which have $2$-sparse matrices in Young's Orthogonal representation (see Eq.\eqref{eq_sn_on_T}) and are thus easy to implement. For example,  Ref.~\cite{grinko2023efficient} gives an explicit compilation for the action of adjacent transpositions on Yamanouchi registers using $\OC(n^2)$ elementary gates.

\item Consider the ``embedding'' unitary $V_E$ on $\HC_{T^n}\otimes \HC_{W^n}$ that acts only nontrivially on the subspace $\HC_{T_n}^{S_n/S_{n-1}} \otimes \HC_{W^n}^{S_{n-1}}$, whose action is given by
\begin{align}
     V_E\Big( \ket{t_n^i}_{T_n} \otimes \ket{x^{n-1}y^{n-1},**}_{W^n} \Big) = \ket{t_n^i}_{T_n} \otimes  \sum_{\lm \in \lm_{n-1}+\Box} \sqrt{ \frac{d_\lm}{n d_{\lm_{n-1}}}} \ket{x_\lm,y_\lm}_{W^n} \\
\end{align}
where $x_\lm,y_\lm$ are the unique Yamanouchi words with type $\lm$ and prefixes $x^{n-1}$ and $y^{j-1}$. Note that $\HC_{W^n}^{S_{n-1}}$ and $\HC_{W^n}^{S_{n}}$ are orthogonal but this unitary is effectively rotating the subspace $\HC_{W^n}^{S_{n-1}}$ inside of $\HC_{W^n}^{S_{n}}$, more precisely mapping into $\HC_n \subset \HC_{W^n}^{S_{n}}$ (see Eq.~\eqref{eq_Hn}), provided the register $T_n$ is not in the dummy state. Note that for any $f^{n-1}:\mbb{C}[S_{n-1}] \arr \mbb{C}$, we have
\begin{equation}
V_E\Big( \ket{t_n^i}_{W_n} \otimes \ket{\hat{f}^{n-1}_i,**}_{W^{n-1}} \Big) = \ket{t_n^i}_{W_n} \otimes \ket{\widehat{()f^{n-1}_i}}_{W^n}\,,
\end{equation}
that is, it embeds $\ket{\hat{f}^{n-1}_i,**}_{W^{n-1}} \in \HC_{W^n}^{S_{n-1}}$ into $ \ket{\widehat{()f^{n-1}_i}}_{W^n} \in \HC_n \subset \HC_{W^n}^{S_{n}}$ where $()$ represent the identity permutation. Here, $\widehat{()f^{n-1}_i}$ is the Fourier transform of a function $f$ such $f(g)=f^{n-1}_i(g)$ if $g=()h$ for some $h\in T^{n-1}$ and $f(g)=0$ else. We can write $V_E$ explicitly in therms of the orthonormal bases of $\HC_{W^n}^{S_{n-1}} $ and of $\HC_n$,
\begin{equation}
    V_E = \ketbra{*}{*}_{T_n}\otimes I_{W^n}+ \Big(\sum_{i=1}^n \ketbra{t_n^i}{t_n^i}_{T_n} \Big) \otimes \Big(\sum_{h\in S_{n-1}} \ketbra{\hat{h},**}{\widehat{()h}} +\ketbra{\widehat{()h}}{\hat{h},**} + P_1 \Big)\,.
\end{equation}
where we use $P$ to denote the orthogonal projector $P:\HC_{W^n} \arr \Big(\HC_{W^n}^{S_{n-1}} \oplus \HC_n \Big)^{\perp}$ into the complement (denoted with the $\perp$) of $\Big(\HC_{W^n}^{S_{n-1}} \oplus \HC_n \Big)$ in $\HC_{W^n}$. Note that $V_E$ is too an involution, $V_E=V_E\ad$ and $V_E^2=I_{T_n}\otimes I_{W^n}$.

\item Finally, $S_i$ is a swap between the states $\ket{t_n^i}_{T_n}$ and $\ket{*}_{T_n}$ in the transversals register, conditioned on the state of the word register $\HC_{W^n}$ being within its subspace $\HC_n\subset \HC_{W^n}^{S_n}$. That is, for any $h\in S_{n-1}$,
\begin{align}
    &S_i \Big( \ket{t_n^i}_{T_n}\otimes \ket{\widehat{()h}}_{W^n} \Big) =  \ket{*}_{T_n}\otimes \ket{\widehat{()h}}_{W^n} \\
    &S_i \Big( \ket{*}_{T_n}\otimes\ket{\widehat{()h}}_{W^n} \Big) =  \ket{t_n^i}_{T_n}\otimes \ket{\widehat{()h}}_{W^n}
\end{align}
Again, one can write
\begin{equation}
S_i = \Big(\ketbra{t_n^i}{*} + \ketbra{*}{t_n^i}\Big)\otimes Q + \Big(\sum_{j\neq i} \ketbra{t_n^j}{t_n^j}\Big) \otimes Q^{\perp}
\end{equation}
with $Q$ the orthogonal projector into $\HC_n$ and $Q^\perp$ such that $Q+Q^\perp = I_{W^n}$. Once more, we note $S_i$ is involutory.

\end{itemize}
\end{definition}

We here work out the proof of Lemma~\ref{lemma1}.
\begin{proof}[Proof of Lemma~\ref{lemma1}]
We use induction: explicitly apply $Z_l$ on $\ket{\phi_{l-1}}$ and show one gets $\ket{\phi_l}$. Let us divide the terms in $\ket{\phi_{l-1}}$ into three types:
\begin{align}
&\encircle{A}: \ket{*}_{T_n} \otimes \ket{\hat{F}_j} \text{ for $j\in[1,l-1]$}\\
&\encircle{B}: \ket{t_n^l}_{T_n} \otimes \ket{\hat{f}_l^{n-1},**}\\
&\encircle{C}: \ket{t_n^j}_{T_n} \otimes \ket{\hat{f}_j^{n-1},**}\text{ for $j\in[l+1,n]$}
\end{align}

Note that in order to get $\ket{\phi_l}$, $Z_l$ needs to act trivially on $\encircle{A}$ and $\encircle{C}$ terms in $\ket{\phi_{l-1}}$, while taking $\ket{t_n^l}_{T_n} \arr\ket{*}_{T_n} \otimes \ket{\hat{F}_l}$.

Lets see what it does on each type of term (see Fig.~\ref{fig:lemma1} for a sketch). On a type $\encircle{A}$-term:
\begin{align}
R_lV_ES_lV_ER_l \Big( \ket{*}_{T_n} \otimes \ket{\hat{F}_j} \Big)&= R_lV_ES_lV_E \Big( \ket{*}_{T_n} \otimes \ket{\hat{F}_{lj}}_{W^n} \Big) \\
&=R_lV_ES_l  \Big( \ket{*}_{T_n} \otimes \ket{\hat{F}_{lj}}_{W^n} \Big) \\
&=R_lV_E \Big( \ket{*}_{T_n} \otimes \ket{\hat{F}_{lj}}_{W^n} \Big)\\
&=R_l  \Big( \ket{*}_{T_n} \otimes \ket{\hat{F}_{lj}}_{W^n} \Big) \\
&= \ket{*}_{T_n} \otimes \ket{\hat{F}_j}_{W^n}\,,
\end{align}
where $R_l$ took $\ket{\hat{F}_j} \in \HC_j \subset \HC_{W^n}^{S_n}$ into $\ket{\hat{F}_{lj}} \in \HC_{lj} = \spn\{ \ket{\widehat{t_n^lt_n^j h}}\}_{h\in S_{n-1}}$, then $V_E$ acted trivially (because of the control on $\ket{*}_{T_n}$), then $S_l$ also acted trivially (because $\ket{\hat{F}_{lj}} \not\in \HC_n$), then again $V_E$ acts trivially and $R_l$ takes $\ket{\hat{F}_{lj}}$ back to $\ket{\hat{F}_{j}}$.

On the type $\encircle{B}$-term:
\begin{align}
R_lV_ES_lV_ER_l \Big( \ket{t_n^l}_{T_n} \otimes \ket{\hat{f}^{n-1}_l, **}_{W^n}  \Big)&= R_lV_ES_lV_E \Big( \ket{t_n^l}_{T_n} \otimes \ket{\hat{f}^{n-1}_l, **}_{W^n} \Big) \\
&=R_lV_ES_l \Big(\ket{t_n^l}_{T_n} \otimes\ket{ \widehat{() f^{n-1}_l} }_{W^n}\Big) \\
&=R_lV_E \Big(\ket{*}_{T_n} \otimes\ket{ \widehat{() f^{n-1}_l} }_{W^n} \Big)\\
&=R_l \Big(\ket{*}_{T_n} \otimes\ket{ \widehat{() f^{n-1}_l} }_{W^n} \Big) \\
&=\ket{*}_{T_n} \otimes\ket{ \hat{F}_l }_{W^n} \\
\end{align}
where, in order, we first used that $R_l$ acts trivially outside $\HC_{W^n}^{S_n}$, then $V_E$ embedded $\ket{\hat{f}^{n-1}_l, **}$ into $\ket{ \widehat{() f^{n-1}_l} } \in \HC_n \subset \HC_{W^n}^{S_n}$, after which $S_l$ acted non-trivially (both conditions are met) mapping $\ket{t_n^l}_{T_n} \arr \ket{*}_{T_n}$, then $V_E$ acted trivially (because of the control on $\ket{*}_{T_n}$), and finally $R_l$ pushed $\ket{ \widehat{() f^{n-1}_l} }$ into $\ket{\hat{F}_l }$.

On a type $\encircle{C}$-term:
\begin{align}
R_lV_ES_lV_ER_l \Big( \ket{t_n^j}_{T_n} \otimes \ket{\hat{f}^{n-1}_j, **}_{W^n}  \Big)&= R_lV_ES_lV_E \Big( \ket{t_n^j}_{T_n} \otimes \ket{\hat{f}^{n-1}_j, **}_{W^n} \Big) \\
&=R_lV_ES_l \Big(\ket{t_n^j}_{T_n} \otimes\ket{ \widehat{() f^{n-1}_j} }_{W^n}\Big) \\
&=R_lV_E \Big(\ket{t_n^j}_{T_n} \otimes\ket{ \widehat{() f^{n-1}_j} }_{W^n}\Big)\\
&=R_l \Big( \ket{t_n^j}_{T_n} \otimes \ket{\hat{f}^{n-1}_j, **}_{W^n} \Big) \\
&=\ket{t_n^j}_{T_n} \otimes \ket{\hat{f}^{n-1}_j, **}_{W^n} \,, \\
\end{align}
where, for the same reasons as above $R_l$ acts trivially outside $\HC_{W^n}^{S_n}$ and $V_E$ embeds $\ket{\hat{f}^{n-1}_j, **}$ into $\ket{ \widehat{() f^{n-1}_j} } \in \HC_n \subset \HC_{W^n}^{S_n}$. Now, instead, $S_l$ acts trivially (since $t_n^j\not\in\{*,t_n^l\}$) and therefore $V_E$ un-embeds the $W^n$-register. Finally, $R_l$ acts trivially since the $W_n$ register is dummy on $\ket{\hat{f}^{n-1}_j,**}$.

\end{proof}

\begin{figure}[h]
    \centering
    \includegraphics[width=\textwidth]{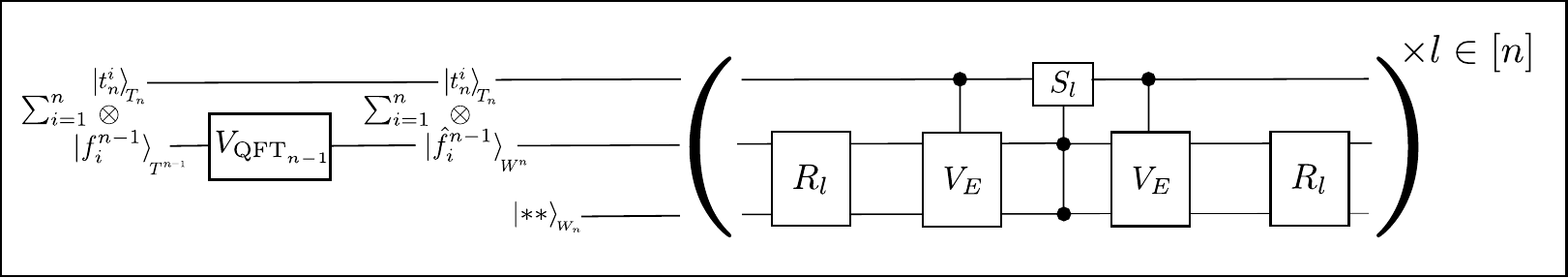}
    \caption{Schematic circuit for $S_n$-QFT.}
    \label{fig:sketch}
\end{figure}

\begin{figure}[h]
    \centering
    \includegraphics[width=\textwidth]{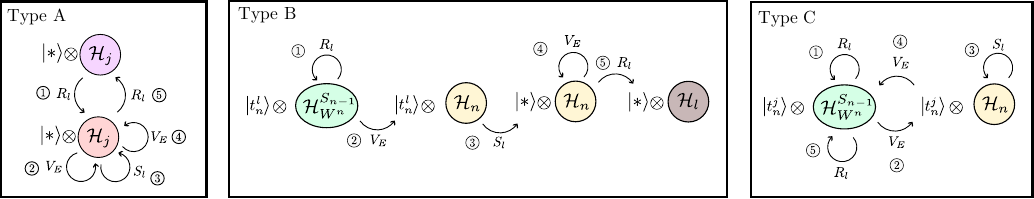}
    \caption{Sketch for the action of $Z_l= R_l V_ES_l V_E R_l $ on type $A$, $B$ and $C$ terms in $\ket{\phi_{l-1}}$.}
    \label{fig:lemma1}
\end{figure}

\paragraph{$S_\mu$-QFT and $S_a \wr S_b$-QFT.} Recall that the irreducible representations of direct product groups are tensor factors of irreducible representations of the product factors. A Fourier transform over $S_\mu$-QFT, can be implemented as a QFT over each $S_{\mu_i}$ in $S_\mu$ separately and taking a tensor product. The QFT for $G = S_a \wr S_b$ was derived in Moore, Rockmore and Russel \cite{Moore07}.

\subsection{Generalized Phase Estimation over Restricted Representations}
Our algorithm is a variation on Harrow's generalized phase estimation (GPE) when applied to restricted representations~\cite{harrow2005applications}. 
In our algorithm (see Fig.~\ref{fig_1}), we choose $F=H\subset G$, fixing $\CC = \mbb{C}[H]$, and select a target register $\TC = \mbb{C}[G]$. That is, $\TC$ holds a left-regular $G$-rep but, in this context, its best to regard $\TC$ as an $H\subset G$-rep (by restriction). In this target register $\TC$, we initialize: 
\begin{align}
    \rho^\a_{G} &= 
U^\dag \left( \ket{\a}\bra{\a} \otimes \frac{I_{d_\a}}{d_\a} \otimes \frac{I_{d_\a}}{d_\a}\right) U = \frac{\Pi_\a^G}{d_\a^2},
\end{align}
where $U$ is the $G$-QFT and initialize $\CC$ to a uniform superposition and apply a controlled gate $\sum_{h\in H} [\ketbra{h}{h}]_\CC\otimes [R(h)]_\TC$, where as clarified above $R\equiv R_\TC$ is the $H$-rep arising from the restriction of the regular representation of $G$. We then weak-Fourier-sample on the control register, 
finding the outcome $\b \in \widehat{H}$ with probability $p_\a(\b) = \text{Tr}\left(E_\b \rho^\a_G E_\b^\dag \right)$, 
where 
\begin{equation}
    E_\b = \frac{1}{\sqrt{|H|}}\sum_{h \in H} \Bigg[ \Big(\ketbra{\b}{\b} \otimes I_{d_\b}\otimes I_{d_\b} \Big) V \ket{h} \Bigg]_{\mc{C}} \otimes \Big[ R(h)\Big ]_{\mc{T}}
\end{equation}
with $V$ the $H$-QFT and $\ketbra{\b}{\b} \otimes I_{d_\b}\otimes I_{d_\b}$ the measurement of irrep label $\b$ (and discard of the other two registers of multiplicity and dimension, after $H$-QFT).
Note that $[r^{\b}_H(h)]_{ij} = \braket{i}{r^{\b}_H(h)\Big|j}$ and that since $r_G^\a(h)$ is an irrep of $G$ and only enters evaluated on elements of a subgroup $H$, it corresponds to the restriction of the $G$-irrep to the subgroup $H \subseteq G$. The outcome probability becomes:
\begin{equation}
p_\a(\b) = \Tr[E_\b \rho^\a_G E_\b\ad ] = \frac{1}{d_\a^2} \sum_{h,g \in H} \Tr[E_\b \Pi_\a^G E_\b\ad  ]  = \frac{1}{d_\a^2 |H|} \sum_{h,g \in H}  \Tr[ \Pi_\b^H \ketbra{h}{g}] \Tr[ R(h) \Pi^G_\a R(g)\ad]\,.
\end{equation}
where $\Pi_\b^H = V\ad(\ketbra{\b}{\b} \otimes I_{d_\b}\otimes I_{d_\b})V$. Consider the first term
\begin{align}
\Tr[ \Pi_\b^H \ketbra{h}{g}] &= \Tr[ \Big(\ketbra{\b}{\b}\otimes I_{d_\b}\otimes I_{d_\b}\Big) \Big( V \ketbra{h}{g} V\ad\Big)] \\
&= \sum_{\g,\w} \sum_{i,j,k,l\in [d_\b]} \frac{d_\b}{|H|}  [r^\g_H(h)]_{i,j} [r^\w_H(g)]^*_{kl}  \Tr[ \Big(\ketbra{\b}{\b}\otimes I_{d_\b}\otimes I_{d_\b}\Big) \Big(  |\g,i,j \rangle\langle \w,k,l |  \Big) ] \\
& =\frac{d_\b}{|H|}\sum_{i,j \in [d_\b]} [r_H^\b(h)]_{ij} [r_H^\b(g)]^*_{ij}\\
& =\frac{d_\b}{|H|}\sum_{i,j \in [d_\b]} [r_H^\b(h)]_{ij} [r_H^\b(g^{-1})]_{ji}\\
& =\frac{d_\b}{|H|}\sum_{i \in [d_\b]}  [r_H^\b(hg^{-1})]_{ii}\\
&=\frac{d_\b}{|H|} \chi^\b_H(hg^{-1})\,.
\end{align}
Similarly $\Tr[ R(h) \Pi^G_\a R(g)\ad] = \Tr[ \Pi^G_\a R(g^{-1}h)] = d_\a \chi^\a_G(g^{-1}h)$, so overall we have
\begin{align}
p_\a(\b) &= \Tr[E_\b \rho^\a_G E_\b\ad ] \\
&= \frac{1}{d_\a^2|H|} \sum_{h,g \in H} \frac{d_\b d_\a}{|H|} \chi^\b_H(hg^{-1}) \chi^\a_G(g^{-1}h) \\
&= \frac{1}{d_\a^2|H|} \sum_{h,g \in H} \frac{d_\b d_\a}{|H|} \chi^\b_H(hg^{-1}) \chi^\a_G(hg^{-1})^* \\
&=\frac{d_\b}{d_\a} \mult{r^\b_H}{r^\a_G \downarrow^{G}_H}\,.
\end{align}
The last equality follows from expanding the character of the reducible $H$-rep $R\coloneqq r^\a_G\downarrow^G_H$ into irreducible characters
$\chi^R_H = \sum_{\g} \mult{r^\g_H}{r^R_H} \,\chi^\g_H$ and then using the orthonormality of irreducible characters $\langle \chi^\g_H,\chi^\w_H \rangle= \frac{1}{|H|}\sum_{h\in H} \chi^\g_H(h) \chi^\w(h)^* =\d_{\g\w}$. We now closely analyze the number of samples we need to take from GPE to find the multiplicity with a constant probability.

 \begin{lemma}
 \label{Lemma:shots}
Algorithm \ref{theorem_quantum_algo} finds $\mult{r^\b_H}{r^\a_G \downarrow^{G}_H}$ with constant probability after $\OC((d_\alpha/d_\beta)^2)$ shots.
 \end{lemma}
 \begin{proof}
     In each run, the algorithm samples the output distribution: 
     \begin{align}
         p(\beta') &= \frac{d_\beta'}{d_\alpha} \mult{r^{\b'}_H}{r^\a_G \downarrow^{G}_H}
     \end{align}
     Let $x_i$ be a boolean random variable such that $x_i = 1$ if the $i$-th sample is $\beta$ and $x_i = 0$ otherwise. It follows that $\mathbb{E}[x_i] = \frac{d_\beta}{d_\alpha} \mult{r^{\b}_H}{r^\a_G \downarrow^{G}_H}$. Let $N$ be the number of samples taken from the distribution and let $X = \sum_i x_i$, so that $\mathbb{E}[X] = N \mathbb{E}[x_i]$. Chernoff-Hoeffding inequality gives:
    \begin{align}
        \text{Pr} \left[ \left| X - \mathbb{E}[X] \right| > a \right] = \text{Pr} \left[ \left| \frac{d_\alpha}{N d_\beta} X - \mult{r^{\b}_H}{r^\a_G \downarrow^{G}_H} \right| > \frac{d_\alpha}{N d_\beta} a \right] \leq 2 \exp\left( -\frac{2a^2}{N} \right).
    \end{align}
    Because $\mult{r^{\b}_H}{r^\a_G \downarrow^{G}_H}$ is an integer, then as long as $ a d_\alpha / d_\beta N <  1/2$, we can round $X d_\alpha/ d_\beta N$  to the nearest integer and get the correct value of the multiplicity coefficient. We can thus set $a = d_\beta N/2d_\alpha$ and set: 
    \begin{align}
         2 \exp\left( -\frac{2a^2}{N} \right) = C
    \end{align}
    for some small number $C$ (say $C = 1/100$ if we want to have $99\%$ chance of getting the correct answer). But this gives: 
    \begin{align}
        N \leq (d_\alpha/d_\beta)^2 \ln(4/C^2).
    \end{align}
    and proves the assertion.
 \end{proof}

\subsection{Kostka Numbers: Details} 
Recall that a partition $\mu\vdash n$ with $l(\mu)=k \leq n$ defines a subgroup
$S_\mu=S_{\mu_1} \times \cdots \times S_{\mu_k} \subseteq S_n$ called a \textit{Young subgroup}.
An irrep of $S_\mu =\bigtimes_i S_{\mu_i}$ is given by the tensor of arbitrary irreps $r^{\alpha_i}_{S_{\mu_i}}$ of each of the factors where $\alpha_i \vdash \mu_i$. The irreducible representation $r^\alpha_{S_\mu}$ is thus labeled by a vector of partitions $\a=(\a_1,\cdots,\a_k)$, where $\a_i\vdash \mu_i$. In particular, the trivial irrep corresponds to $\a=([\mu_1],\cdots,[\mu_k]) := \triv$. 

Let us denote the Specht module (see Eq.~\eqref{eq_weight_mod}) of $r^\triv_{S_\mu}$ by $\mc{S}^\triv_{S_\mu}$.
For each $\mu\vdash n$, the \textit{permutation module} $\mc{M}^\mu$ is the $S_n$-module that is induced from the trivial irrep of the corresponding Young subgroup $S_\mu$, $\mc{M}^{\mu} = (\mc{S}^{{\rm triv}}_{S_\mu} ){\uparrow^{S_n}_{S_\mu}}\,.$
\begin{equation}
    \mc{M}^{\mu} = (\mc{S}^{{\rm triv}}_{S_\mu} ){\uparrow^{S_n}_{S_\mu}}\,.
\end{equation}
This module admits an action with an $S_n$ representation $R^\mu$.
The \textit{Kostka numbers} $K_{\lm}^{\mu}$ are the multiplicity of $S_n$-irreps $\mc{S}^\lm_{S_n}$ in $\mc{M}^{\mu}$:
\begin{equation}
\mc{M}^{\mu} \cong \bigoplus_{\lm} \mbb{C}^{K_{\lm}^{\mu}} \otimes \mc{S}^{\lm}_{S_n}\quad \text{ or }\quad R^\mu\cong \bigoplus_{\lm} I_{K_{\lm}^{\mu}} \otimes r^{\lm}_{S_n} 
\end{equation}
where $R^\mu:S_n\arr \U(\mc{M}^\mu)$ is the representation corresponding to the action of $S_n$ on the permutation module $\mc{M}^\mu$. That is, the Kostka coefficients are the dimension of the space of $S_n$-equivariant homomorphisms between permutation modules and Specht modules, $K_{\lm}^\mu = \text{Hom}_{S_n} (\mc{M}^\mu, \mc{S}^{\lambda}_{S_n})$. We also note that 
\begin{align}
K_\lm^\mu & = \{ \text{$\#$ of irreps of type $\lm$ in permutation module $\mc{M}^\mu$} \} = \frac{\dim( (\mc{M}^\mu)_\lm )}{d_\lm} \\
&= \{ \text{$\#$ of states with weight $\mu$ in weight module $\mc{W}^\lm$} \} = \dim( \mc{W}^\lm_\mu)
\end{align}
where $\HC_\lm$ denotes isotypic $\lm$ in module $\HC$. The Kostka number $K^\mu_\lambda$ is equivalently the number of semistandard Young tableaux of shape $\lambda$ and content $\mu$.

\paragraph{Efficient Classical Algorithm for Kostka Numbers with Bounded Irrep Dimension}
The Kostka number $K^{\mu}_\lambda$ is the number of semi-standard Young tableaux of shape $\lambda$ and content $\mu$. It is known that: $K^\mu_\lambda > 0$ if and only if $\lambda \succeq \mu$ in the dominance order ($\mu \succeq \nu$ if $\mu_1 + \ldots + \mu_r \geq \nu_1 + \ldots + \nu_r$ for all $r$).
 
 \begin{lemma}[Fayers \cite{fayers2019note}] $\mu \succeq \nu \implies K_\lambda^\nu \geq K_\lambda^\mu$. 
 \label{lemma:Fayers}
 \end{lemma}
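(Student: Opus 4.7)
My plan is to prove the lemma by exhibiting an explicit injection $\Phi: \mathrm{SSYT}(\lambda, \mu) \hookrightarrow \mathrm{SSYT}(\lambda, \nu)$, from which the inequality on cardinalities follows. By composing such injections, it suffices to treat the case of a single cover in the dominance order, namely $\nu = \mu - e_i + e_{i+1}$ for some index $i$ with $\mu_i > \mu_{i+1}$, so that $\nu$ remains a partition; any chain $\mu \succeq \nu$ decomposes into a sequence of such elementary moves, a standard fact about Young's lattice.

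For the elementary case I would use a Bender-Knuth style pairing restricted to the letters $\{i, i+1\}$. In every column of $T \in \mathrm{SSYT}(\lambda, \mu)$ that contains both an $i$ and an $(i+1)$ (necessarily in adjacent rows by column-strictness), pair these two entries and call them \emph{matched}; all other occurrences of $i$ and $(i+1)$ are \emph{unmatched}. Row weak-increase implies that in each row the unmatched $i$'s form a contiguous block lying immediately to the left of the unmatched $(i+1)$'s. A simple count gives that (unmatched $i$'s) minus (unmatched $(i+1)$'s) equals $\mu_i - \mu_{i+1} > 0$, so at least one unmatched $i$ always exists.

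I then define $\Phi(T) = T'$ by replacing the rightmost unmatched $i$ of $T$ (in some fixed scan order on cells) by an $(i+1)$, leaving all other entries fixed. Semistandardness is preserved: the cell below the modified one is either absent or strictly greater than $i+1$, since otherwise it would equal $i+1$ and our $i$ would have been matched; the cell to its right is either another unmatched $(i+1)$ or strictly greater than $i+1$. The content of $T'$ is $\nu$ by construction.

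The main expected obstacle is verifying injectivity cleanly. The candidate inverse sends $T'$ back to the tableau obtained by demoting the leftmost unmatched $(i+1)$ of $T'$ to an $i$; one must check that the Bender-Knuth pairings on $T$ and $T'$ are related in a controlled way, so that no matched pair is silently broken or created by the modification, and the promoted/demoted cells correspond. The most transparent way to justify this is via the Kashiwara $\mathfrak{sl}_2$-crystal structure on $\{i, i+1\}$-restricted tableaux, where $\Phi$ realizes a single application of the lowering operator $f_i$; the hypothesis $\mu_i > \mu_{i+1}$ then places every weight-$\mu$ element strictly above the bottom of its $\mathfrak{sl}_2$-string, so $f_i$ is everywhere defined on $\mathrm{SSYT}(\lambda, \mu)$ and injective there, yielding the claimed inequality $K^\mu_\lambda \leq K^\nu_\lambda$.
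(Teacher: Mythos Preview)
The paper does not itself prove this lemma; it is merely quoted with a citation to Fayers. Your overall strategy---reduce to an elementary dominance step $\nu = \mu - e_i + e_{i+1}$ and build an injection $\mathrm{SSYT}(\lambda,\mu)\hookrightarrow\mathrm{SSYT}(\lambda,\nu)$ via a crystal-type operator---is exactly the right one, and is how the result is usually shown. The problem is that the specific map $\Phi$ you write down is \emph{not} the crystal lowering operator $f_i$, and it is not injective. Your column-based Bender--Knuth pairing is strictly coarser than the crystal signature rule (the latter also cancels certain cross-column pairs in the reading word), so your ``rightmost unmatched $i$'' can disagree with the cell on which $f_i$ acts, and your candidate inverse does not undo $\Phi$.

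Here is a small counterexample. Take $\lambda=(3,2)$, $i=2$, $\mu=(2,2,1)$, so $\mu_2>\mu_3$. The two tableaux
\[
T_1=\begin{ytableau}1&1&3\\2&2\end{ytableau}
\qquad\text{and}\qquad
T_2=\begin{ytableau}1&1&2\\2&3\end{ytableau}
\]
exhaust $\mathrm{SSYT}(\lambda,\mu)$. In $T_1$ the column-unmatched $2$'s lie in columns $1$ and $2$, so your $\Phi$ changes cell $(2,2)$; in $T_2$ they lie in columns $1$ and $3$, so your $\Phi$ changes cell $(1,3)$. Either way
\[
\Phi(T_1)=\Phi(T_2)=\begin{ytableau}1&1&3\\2&3\end{ytableau}\,,
\]
so $\Phi$ is two-to-one. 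By contrast the genuine crystal operator $f_2$, computed via the reading-word signature rule, acts on cell $(2,1)$ of $T_2$ and sends it to
\[
f_2(T_2)=\begin{ytableau}1&1&2\\3&3\end{ytableau}\,,
\]
while $f_2(T_1)=\Phi(T_1)$, restoring injectivity. Replacing your column pairing by the actual crystal signature rule repairs the argument; the rest of your outline then goes through.
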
 This in particular means that $d_\lambda \geq K_\lambda^\mu$ for all $\mu$, with equality only if $\mu = (1,1,1 \ldots)$ in which case $K_\lambda^{(1,1,1 \ldots)} = d_\lambda$ counts the number of standard tableaux of shape $\lambda$. This is equal to a dimension of the $S_n$ irreducible representation $\lambda$. We use this to give an efficient algorithm that computes the Kostka numbers whenever $d_\lambda \leq poly(n)$.

\begin{algorithm}[H]
\caption{Algorithm that generates all SSYT of shape $\lambda$ and content $\mu$}
\label{alg:SSYT}
\begin{algorithmic}[1]
\Procedure{SSYT}{$\lambda,\mu$} \Comment{generates a set of all SSYTs with shape $\lambda$ and content $\mu$.}
\State $T \gets$ empty Young Tableau of shape $\lambda$
\State $states \gets set( (T, \mu))$ \Comment{\emph{states} is a set of tuples (tableau, content).}
\While{True}
\State $T', \mu' \gets states.pop(0)$\Comment{pop(0) removes the first element of \emph{states} and returns it.}
\State $corners \gets corners(T')$ \Comment{corners($T'$) returns a list of all corners of a tableau $T'$.}
\For{$corner$ in $corners$}
\State $Q \gets T'$
\State $m \gets \text{largest entry in } \mu'$
\State $\text{remove $m$ from } \mu'$ \Comment{for example if $\mu' = (1^3, 2^2, 3)$, then $m=3$ and $\mu' \rightarrow (1^3, 2^2)$}
\State $Q (corner) \gets m$ 
\If{$Q \text{ is SSYT}$} \Comment{checks if $Q$'s nonempty boxes are non-decreasing along rows and increasing along columns.}
\State $states.add((Q, m))$ \Comment{since \emph{states} is a set, all of its elements are unique.}
\EndIf
\EndFor
\If{\text{all \emph{states} have empty content} }
\State \Return  $states$
\EndIf
\EndWhile
\EndProcedure
\end{algorithmic}
\end{algorithm}

 \begin{figure}[h]
        \centering
        \includegraphics[scale=0.35]{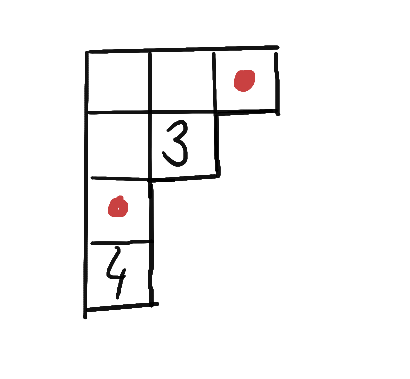}
        \caption{A corner is a box in a partially-filled Young diagram such that there is no box to the right of it or the box to the right of it is non-empty and there is not box below it or the box below it is non-empty. The red dots above label corners in a partially-filled semistandard Young Tableau of shape $\lambda = (3,2,1,1)$.}
        \label{fig:corners}
    \end{figure}

\begin{figure}[h]
        \centering
        \includegraphics[scale=0.35]{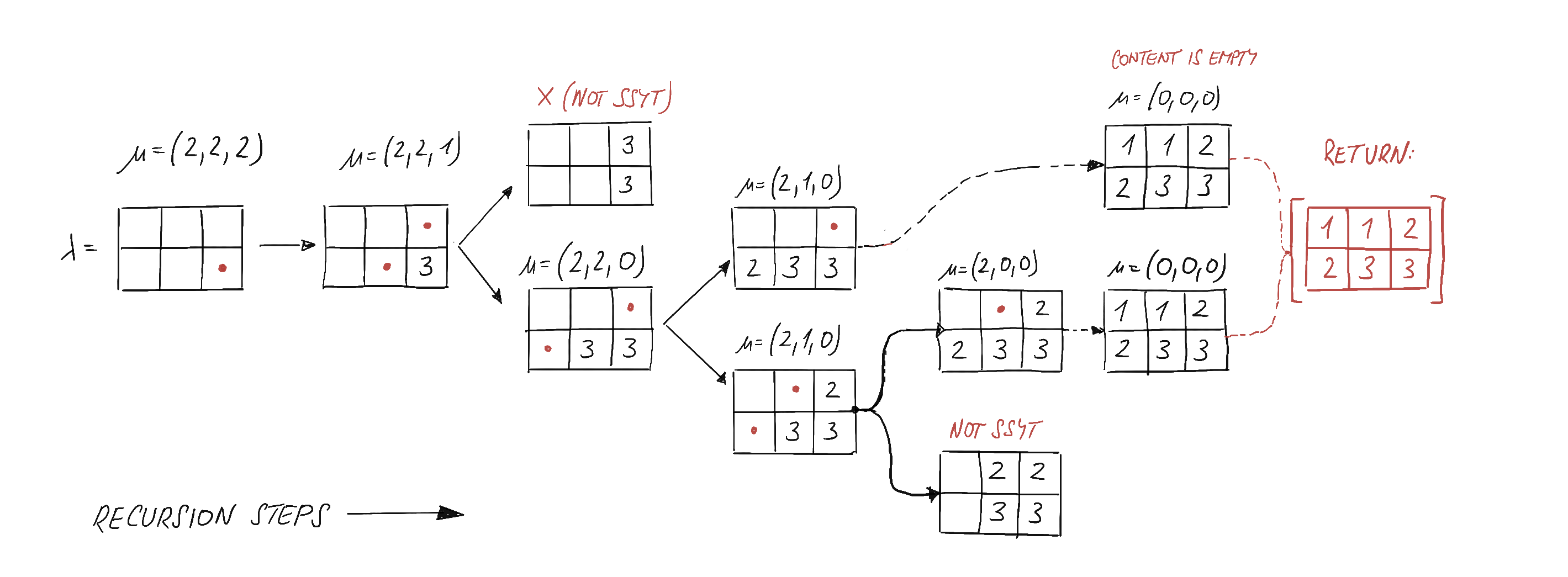}
        \caption{Execution of Algorithm 1 on input $\lambda = (3^2), \mu = (2^3)$.}
        \label{fig:recursion}
    \end{figure}

\begin{figure}[h]
    \centering
    \includegraphics[scale=0.35]{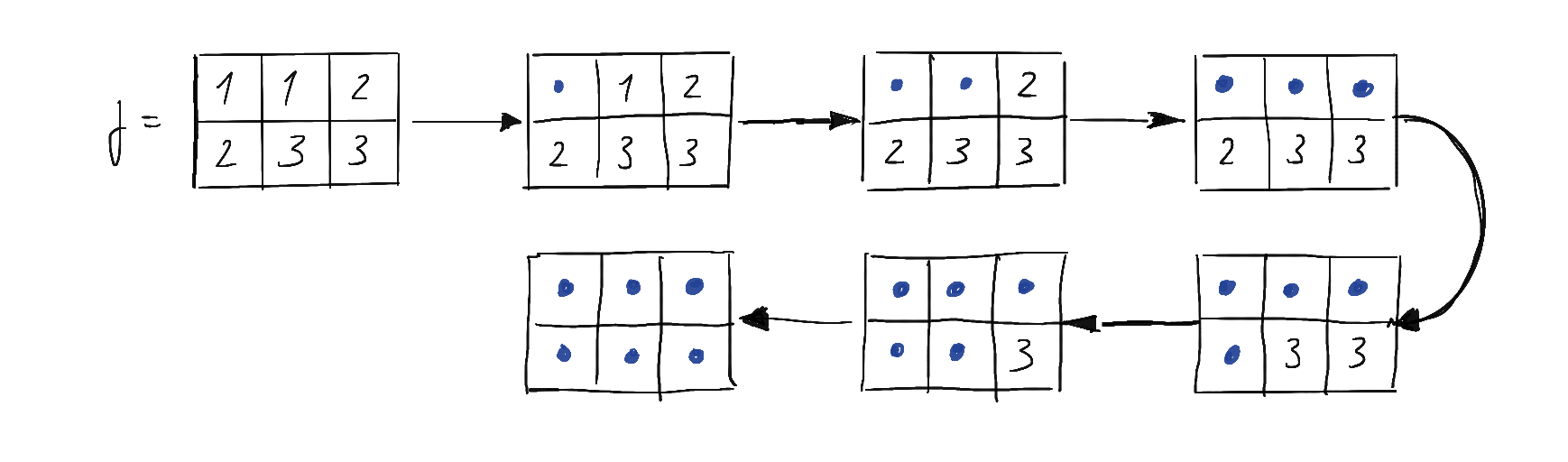}
    \caption{Running the algorithm in reverse on input $[[1,1,2],[2,3,3]]$. The same path, when reversed, appears in Fig~\ref{fig:recursion}}
    \label{fig:backtracking}
\end{figure}

\begin{lemma}
\label{app:kostka_in_FP}
There is a classical polynomial-time algorithm that computes $K^\mu_\lambda$ whenever $d_\lambda \leq \text{poly}(n)$.
\end{lemma}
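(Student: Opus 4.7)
The plan is to analyze Algorithm~\ref{alg:SSYT}, which enumerates all SSYTs of shape $\lambda$ with content $\mu$ by placing entries from largest to smallest into admissible corners of a growing partial tableau, using a set to deduplicate intermediate states. Once the algorithm is shown to enumerate the SSYTs, reading off the cardinality of the final $states$ returns $K^\mu_\lambda$. The two things to verify are correctness of the enumeration and a polynomial bound on runtime under the assumption $d_\lambda \leq \poly(n)$.

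For correctness, I would first observe that in any SSYT $T$ of shape $\lambda$, the positions occupied by the maximum entry form a horizontal strip whose boxes are all corners of $T$, so deleting any one of them yields a valid partial SSYT with one fewer entry. This gives, inductively, a reverse trajectory from $T$ down to the empty tableau, so the BFS performed by Algorithm~\ref{alg:SSYT} visits every SSYT along at least one branch; deduplication via the set ensures each full SSYT is stored exactly once. Correctness is then completed by noting that the ``$Q$ is SSYT'' test correctly preserves validity and that the termination condition ``all states have empty content'' is exactly the condition that every partial tableau has been completed.

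The substantive part is the runtime bound. Per iteration, the work is $O(n)$ (enumerate at most $n$ corners and check row/column monotonicity), so the job is to bound the total number of states ever added. By Fayers's monotonicity (Lemma~\ref{lemma:Fayers}), since $(1^n)$ is the minimum element of the dominance order on partitions of $n$, we have $K^\mu_\lambda \leq K^{(1^n)}_\lambda = d_\lambda$, so the final $|states|$ is at most $d_\lambda$. For the intermediate states, a partial tableau at depth $k$ is a skew SSYT of shape $\lambda/\nu$ for some Young sub-diagram $\nu \subseteq \lambda$ with $|\nu| = n-k$, filled with the $k$ largest entries of $\mu$. Using the injection that concatenates a SYT of $\nu$ with labels $1,\ldots,|\nu|$ and a skew SYT of $\lambda/\nu$ with labels $|\nu|+1,\ldots,n$ into a SYT of $\lambda$, one obtains $f^{\lambda/\nu} \leq d_\lambda$, and an analog of Fayers's lemma for skew shapes gives that the number of skew SSYTs of shape $\lambda/\nu$ with any fixed content is at most $f^{\lambda/\nu} \leq d_\lambda$. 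Since the sub-diagrams $\nu$ reachable at step $k$ themselves inject into SYTs of $\lambda$ (each such $\nu$ is the outline of the labels $\leq n-k$ in some SYT of $\lambda$), their number is at most $d_\lambda$ as well. Combining these bounds gives $|states| \leq d_\lambda^2$ at every step, and summing over the $n$ levels of the BFS yields total runtime $O(n^2 d_\lambda^2) = \poly(n)$.

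The main obstacle I anticipate is establishing the skew-shape analog of Fayers's monotonicity used above. I expect this to follow from a Bender--Knuth-style involution adapted to the skew setting, or alternatively by combining standard monotonicity of skew Kostka numbers in the dominance order with the identity $K^{(1^{|\lambda/\nu|})}_{\lambda/\nu} = f^{\lambda/\nu}$. A looser but self-contained alternative would bound the intermediate states by a direct recursive count of partial tableaux, avoiding Fayers's lemma for the skew case entirely at the cost of a slightly larger polynomial.
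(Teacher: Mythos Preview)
Your proposal is correct and follows essentially the same approach as the paper: analyze Algorithm~\ref{alg:SSYT}, argue correctness by running it in reverse, and bound the number of intermediate states via Fayers-type monotonicity. You are in fact more careful than the paper at the one nontrivial step---the paper simply asserts $|states|\le d_\lambda$ at each level, whereas you derive $|states|\le d_\lambda^2$ via a skew analog of Lemma~\ref{lemma:Fayers} (which does hold: expand $s_{\lambda/\nu}=\sum_\rho c^{\lambda}_{\nu\rho}\,s_\rho$ and apply Fayers termwise); the paper's tighter $d_\lambda$ bound can be recovered directly by standardizing each level-$k$ skew filling to a skew SYT with labels $n{-}k{+}1,\ldots,n$ and prepending a fixed SYT of the inner shape, which injects the level-$k$ states into the SYTs of $\lambda$.
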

\begin{proof}
    See Algorithm~\ref{alg:SSYT}. The set \emph{states} stores tuples $(T, \mu)$, where $\lambda$ is a Young tableau that can have some boxes empty.
    A \emph{corner} is a box in the Young diagram such that there is no box to the right of it or the box to the right of it is not empty and there is no box below it or the box below it is non-empty (see Fig.~\ref{fig:corners}). The algorithm always returns semistandard Young tableaux because it checks if the partially filled tableau  satisfies the constraints of SSYT before including it into the set \emph{states}. It also returns all SSYT of shape $\lambda$ and content $\mu$; this can be seen by considering an SSYT $\gamma$ of shape $\lambda$ and content $\mu$ and running the algorithm `in reverse': 
    If the Algorithm~\ref{alg:SSYT} generated $\gamma$, then the set states must have contained a state $(\gamma, ())$. This state was added to $\emph{states}$ if  $\emph{states}$ contained $(T, (1))$, where $T$ is $\gamma$ with its top-left entry removed. By similar logic, it is easy to see that if $(T, \nu)$ was contained in the set \emph{states}, then either it is $(\lambda, \mu)$ or $(T', \nu')$ was contained in states, where $T'$ is $T$ without the smallest element in the top-left and $\nu'$. This shows that the algorithm produces all SSYT of shale $\lambda$ and content $\mu$. It remains to prove the efficiency of the algorithm if $d_\lambda \leq poly(n)$. 
    Notice that Lemma~\ref{lemma:Fayers} implies that there are at most $d_\lambda$ semistandard partitions and it follows that at every recursion step, the set \emph{states} contains at most $d_\lambda$ many Young tableaux. There are at most $n$ recursion steps and it follows that the algorithm runs in polynomial time. See Fig.~\ref{fig:recursion}. 
\end{proof}

While this algorithms suffices for our purposes, \cite{panova2025polynomialtimeclassicalversus} sketched a more efficient algorithm for this task.

\subsection{Littlewood-Richardson~Coefficients: Details}
Let $W$ be a $\SU(d)$-module. For any partition $\lm\vdash n \in \mbb{N}$ and $l(\lm)\leq d$, 
The \textit{Schur functor}\footnote{When we treat the module $V$ as an object in the category of representations of $G$; $\emph{Rep}(G)$ then $\mathbb{S}^\lambda$ is a functor mapping the representation $V$ to the $\lambda$-isotypic component of $V^{\otimes n}$ called the \emph{Schur Functor}.}
$\mathbb{S}^\lambda$ is the map from $W$ to the isotypic component for $S_n$ labeled by $\lm$ in $W\tn$.
That is, $\mbb{S}^\lm(W)$ is a new $\SU(d)$-module that is obtained by first tensoring $n$ copies of $W$ and then projecting into the subspace containing all $S_n$-irreps of type $\lm$
\begin{align}
    \mathbb{S}^\lambda(W) := \Pi_\lambda W^{\otimes n} \subset W\tn.
\end{align}

The choice $W=\mbb{C}^d$, the standard irrep of $\SU(d)$, yields, for the different $\lm$, all irreducible $\SU(d)$-modules $\mbb{S}^\lm(\mbb{C}^d)= \mc{W}^\lm$ (the \textit{weight modules} in Eq.~\eqref{eq_weight_mod}). That is
\begin{equation}
    r^\lm_{\SU(d)} : \SU(d) \arr \U( \mbb{S}^\lm(\mbb{C}^d)) \,.
\end{equation}

The Littlewood-Richardson coefficients appear in the decomposition of the tensor product of $\SU(d)$ irreducible representations
\begin{align}
    r^\lm_{\SU(d)} \otimes r^\mu_{\SU(d)} & \cong \bigoplus_{\nu} I_{c_{\lm\mu}^\nu} \otimes r^\nu_{\SU(d)}.
\end{align}
Crucially (for us), the same coefficients appear in the restriction of the $S_n$-irreps $r_{S_n}^{\nu}$ to the subgroup $S_{a} \times S_{b}$, with $a+b=n$. Specifically, 
\begin{align}
   r_{S_n}^{\nu} \downarrow^{S_n}_{S_a \times S_b} \cong \bigoplus_{\substack{\lm\vdash a \\\mu\vdash b}} I_{c^{\nu}_{\lm,\mu}} \otimes \Big( r^{\lm}_{S_a} \otimes r^{\mu}_{S_b} \Big)
   \,.
\end{align}

\paragraph{Combinatorial Interpretation for LR: Littlewood-Richardson's rule.}
A combinatorial interpretation for LR coefficients was originally proposed by Littlewood and Richardson in 1934, but only proven four decades later~\cite{howe2012should,stembridge2002concise,James84}.
\begin{definition}[LR rule]
The LR coefficient $c_{\lm\mu}^\nu$ equals the number of LR tableau of shape $\nu/\lm$ and content $\mu$.
\end{definition}
We need some background to unpack this statement. A LR tableau is a skew semi-standard tableau with the extra condition that the concatenation of its reversed rows is a Yamanouchi word (every prefix contains at least the same number of $i$'s than of $i+1$'s).
A \textit{skew Young diagram} is formed by subtracting one Young diagram $\lm$ from another `larger' Young diagram $\nu$, provided $\lm$ `fits completely within $\nu$' -- provided $\lm_i\leq \nu_i$ for all $i$. This results in a non-standard `skew' shape $\nu/\lm$ that does not necessarily start at the top-left corner. 
For example, if $\nu=[5,3,1]$ and $\lm=[3,1]$ we have the skew diagram
\begin{equation}\label{eq_skew_diagram}
\nu/\lm=\ydiagram{3+2,1+2,1}\,.
\end{equation}
Here, for example, the first row is formed by skipping the first three boxes (part of $\lm$) and then placing two boxes (remaining part of $\nu$).
A \textit{skew Young tableau} is a filling of a skew Young diagram with integers. As a direct extension of SYTs and SSYTs introduced in Eqs.~\eqref{eq_SYT} and~\eqref{eq_SSYT}, we can have skew SYTs and skew SSYTs. For example, for the skew shape $\nu/\lm=[5,3,1]/[3,1]$, we could have the following standard and semi-standard skew tableaux, respectively
\begin{equation}\label{eq_skew_tableaux}
\ytableausetup{boxsize=normal,aligntableaux=center}
\begin{ytableau}
   \none & \none & \none & 1 & 2 \\
   \none & 3 & 4 \\
   5
\end{ytableau}\,, \quad 
\ytableausetup{boxsize=normal,aligntableaux=center}
\begin{ytableau}
   \none & \none & \none & 1 & 1 \\
   \none & 2 & 2 \\
   3
\end{ytableau}\,.
\end{equation}
The content of such is $\mu=[1^5]$ and $\mu=[2,2,1]$, respectively.
The first tableau is not a LR tableau since its reverse row reading, $21435$, is not a Yamanouchi word -- the prefix $2$ has more $2$'s (one) than $1$'s (zero). Instead the second tableau is a LR tableau since its reverse row reading, $11223$, has, for all prefixes, at least as many $i$ as $i+1$:
- "1" and "11" are trivially Yamanouchi.
- "112" has more 1's than 2's.
- "1122" still has more 1's than 2's.
- "11223" finally, adds a 3, still maintaining more 1's than 2's, and it has just one 3, satisfying the condition. This means that $c_{\lm\mu}^\nu$ is at least 1 for $\nu=[5,3,1]$, $\lm=[3,1]$ and $\mu=[2,2,1]$.

\subsection{Kronecker Coefficients: Details} 
Kronecker coefficients are the multiplicities of $S_n$ irreducible representations in their tensors: 
\begin{align}
    g_{\mu \nu \lambda} &= \mult{r^\lambda_{S_n}}{r^\mu_{S_n} \otimes r^\nu_{S_n}}, \; \mu, \nu, \lambda \vdash n,
\end{align}
that play an important role in representation theory, algebraic combinatorics and many other areas. It is a major open problem (Stanley's Problem 10 \cite{Stanley99}), whether the coefficients can be expressed as a sum of combinatorial objects or not. This question plays an important role in the Geometric Complexity Theory approach to $\P$ vs $\NP$ of Mulmuley and Sohoni \cite{Mulmuley12GCT} and has attracted a lot of interest for this reason (see \cite{pak2015complexity} and references therein). The coefficients also appeared in works on quantum marginal problem (see for example Refs.~\cite{Christandl07, Christandl05, klyachko2004quantummarginalproblemrepresentations}). The coefficients can be also computed using the character formula: 
\begin{align}
    g_{\mu \nu \lambda} &= \frac{1}{n!} \sum_{\pi \in S_n} \chi^\mu(\pi) \chi^\nu(\pi) \chi^\lambda(\pi),
\end{align}
which shows that the coefficients are symmetric under permutation of its inputs. They have been shown to be $\# \P$-hard to compute \cite{Burgisser08}, but are known to be in $\# \P$ on some sets of restricted inputs \cite{pak2015complexity} and in $\P$ for inputs with constant length \cite{Christandl12}. See Ref.~\cite{bravyi2023quantum} for additional background on the coefficients in context of quantum computation. 

\paragraph{Kronecker Coefficients: Inputs with a possible polynomial speedup.}

A nontrivial set of inputs in which the quantum algorithm runs in polynomial time is when $d_\mu \leq \poly(n)$ and $d_\lambda = d_\nu$ is arbitrary. A nice set of partitions that satisfy this condition is $\mu = (n-k, \alpha_1, \alpha_2 \ldots)$, where $\alpha$ is a partition of $k$ (for a constant $k$), such that $\alpha_1 < n-k+1$. A simple application of the hook length formula shows that the dimension of such partition is upper bounded by $(d_\alpha n^k/k!)$, which is polynomial of degree $k$ in $n$.

\begin{lemma}
\label{Lemma:partitions}
    Let $\mu = (n-k, \alpha_1, \alpha_2 \ldots)$, where $\alpha$ is a partition of $k$ (for a constant $k$), such that $\alpha_1 < n-k+1$. Then $d_\mu \leq \frac{k!n^k}{d_\alpha}$.
\end{lemma}
\begin{proof}
Using the hook-length formula, we have:
\begin{align}
    d_\mu &= \frac{n!}{\Pi_{(i,j) \in \mu} h(i,j)} & d_\alpha &= \frac{k!}{\Pi_{(i,j) \in \alpha} h(i,j)},
\end{align}
where $h(i,j)$ is the hook length between boxes $i,j$ in the corresponding partition (the number of boxes to the right of $(i,j)$ plus the number of boxes below $(i,j)$ plus one). The product $\Pi_{(i,j) \in \mu} h(i,j)$ factorizes as: 
\begin{align}
    \Pi_{(i,j) \in \mu} h(i,j) = \Pi_{(1,j), j \in [n-k]} h(1,j) \times \Pi_{(i,j) \in \alpha} h(i,j) =  \Pi_{(1,j), j \in [n-k]} h(1,j) \times \frac{k!}{d_\alpha}
\end{align}
where the first factor is the product of hook-lengths of boxes in the first row of $\mu$. Notice that $h(1,j) \geq n-k - j + 1$, from which: 
\begin{align}
    d_\mu \leq \frac{n! d_\alpha}{k! (n-k)!} \leq \frac{n^k d_\alpha}{k!}
\end{align}
Since $k$ is constant, then $d_\alpha$ is constant and the dimension of $\mu$ is bounded by a polynomial.
\end{proof}

\begin{corollary}
 $\mu = (n-k, \alpha_1, \alpha_2 \ldots)$ and $\nu = \lambda$ be arbitrary partitions of $n$. Then the Algorithm \ref{theorem_quantum_algo} computes the Kronecker coefficient $g_{\mu \lambda \lambda}$  in polynomial time. 
\end{corollary}
\begin{proof}
    Reorder the partitions so that the algorithms estimates $g_{\mu \lambda \lambda}$ by sampling a distribution; 
    \begin{align}
        \frac{d_\lambda}{d_\lambda d_\mu} g_{\mu \lambda \lambda} = \frac{1}{d_\mu} g_{\mu \lambda \lambda}.
    \end{align}
    We have that $d_\mu \leq n^k d_\alpha / k!$ from above, which means that we can resolve the (integer) coefficient $g_{\mu \lambda \lambda}$ with polynomially many shots (following Lemma~\ref{Lemma:shots}, see also below).
\end{proof}
 While we were updating this manuscript, Greta Panova showed that the problem of computing Kronecker coefficients on this class of partitions can be solved in polynomial time by classical algorithm \cite[Theorem 1]{panova2025polynomialtimeclassicalversus} in time $\tilde{\OC}(n^{4k^2+1})$. In the case of Kronecker coefficients, we now claim that each run of Algorithm~\ref{theorem_quantum_algo} requires $\OC(n^4)$ of elementary quantum gates. For $d_\mu \leq n^k$, the Kronecker coefficient can be estimated with high probability with at most $n^k$ many shots; counting each shot as one elementary operation then leads to $\OC(n^{2k+4})$ runtime. Assuming her algorithm is optimal, this suggest a large polynomial speedup in computing the Kronecker coefficients for the partitions in Lemma~\ref{Lemma:partitions}.

 \begin{lemma}
     The Kronecker coefficient algorithm requires $\OC(n^4)$ elementary gates per shot. 
     \label{Lemma:GateComplexityRound}
 \end{lemma}
 \begin{proof}
     The algorithm has three key components: a preparation of the two input states $\rho_{S_n}^\lambda$ and $\rho_{S_n}^\mu$ (which can be done by $S_n$-Fourier transform), a controlled application of two copies of the regular representation and an inverse $S_n$-QFT on the control register. Following the results of \cite{kawano2016quantum} $S_n$-QFT can be decomposed into $\tilde{\OC}(n^4)$-many quantum gates (if not better), which means that both preparation of the input states as well as the inverse $S_n$-QFT require $\OC(n^4)$ many gates. It remains to argue that the controlled operation of the regular representation is also $\OC(n^4)$. As discussed in \cite{kawano2016quantum}, each $\pi \in S_n$ has a unique factorization into cyclic permutations as:
     \begin{align}
         \pi &= (1,2, \ldots n)^{i_n} (1,2 \ldots n-1)^{i_{n-1}} \ldots (1,2,3)^{i_3} (1,2)^{i_2},
     \end{align}
     for integers $i_2, \ldots i_n$. This suggests that we can implement any $\pi$ by a sequence of controlled shift operators, each applied at most $n$ times. Each cyclic shift decomposes into at most $n$ transpositions. 
      Each transposition $t \in S_n$ acts as: 
      \begin{align}
          R(t) \ket{h} = \ket{th}.
      \end{align}
      Assuming that a swap gate between two qubits is an elementary operation, we now argue that a controlled action by a transposition in the regular representation is $n$. To that end, encode each permutation as a vectorized permutation matrix using $n^2$ qubits, for example: 
      \begin{align}
          (132) \mapsto \ket{010|001|100},
      \end{align}
      where the vertical $|$ separators were added to visually separate the rows of the permutation matrix. A transposition matrix acts by swapping exactly two rows (each with at most $n$ qubits), which can be implemented using up to $n$ controlled swaps. This means that the controlled action of $R(t)$ can be implemented with at most $\OC(n^3)$ (cycles $\times$ transpositions $\times$ controlled swaps) elementary gates. The leading contribution to the gate complexity is then the $S_n$-QFT, which gives ${\OC}(n^4)$ elementary gates per round.
 \end{proof}
 We thank Dmitry Grinko for pointing out the factorization from \cite{kawano2016quantum}.

 \begin{lemma}
Let $d_\lambda = d_\nu \geq d_\mu$, such that $d_\mu \leq n^k$. Then algorithm \ref{theorem_quantum_algo} uses $\OC(n^{2k + 4})$ gates and finds $g_{\mu \lambda \lambda}$ with constant probability. 
 \end{lemma}
 \begin{proof}
     Lemma~\ref{Lemma:shots} with $d_\beta := d_\lambda$, $d_\alpha = d_\lambda d_\mu$ and $\mult{r^\lambda_{S_n}}{r^\mu_{S_n} \otimes r^\lambda{S_n}} = g_{\mu \lambda \lambda}$ implies that we need $\OC(n^{2k})$-many shots to estimate the Kronecker coefficient with high probability.  Combining this with Lemma~\ref{Lemma:GateComplexityRound}, leads to $\OC(n^{2k + 4})$ gate complexity for the algorithm. 
 \end{proof}

 \paragraph{Numerical Results: The Christandl, Doran, Walter agorithm}
We studied the Christandl, Doran and Walter \cite{Christandl12} algorithm numerically and compared its performance to a naive, brute-force, implementation of Kronecker computation. The `brute-force' implementation evaluates the character formula for the Kronecker coefficient: 
\begin{align}
    g_{\mu \nu \lambda} &= \sum_{\pi \in S_n} \frac{\chi^\mu(\pi) \chi^\nu(\pi) \chi^\lambda(\pi)}{n!}  = \sum_{\gamma \in C} \frac{c(\gamma) \chi^\mu(\gamma) \chi^\nu(\gamma) \chi^\lambda(\gamma)}{n!},
\end{align}
where $C$ is the set of conjugacy classes of $S_n$, $c(\gamma)$ is the size of the conjugacy class with element $\gamma$ and $\chi^\mu(\gamma)$ is the value of the character on the conjugacy class $\gamma$. We implemented this using Sage's \textit{SymmetricGroupRepresentation} on a given conjugacy class, which we convert to character table. We assume that we precomputed sizes of the conjugacy classes and we time the evaluation of the characters and the summation over the character table. We also do not time the normalization of the character sum that gives Kronecker coefficients. As is, the algorithm is quite inefficient and one may object that the choice of not timing the conjugacy class precomputation is arbitrary. We invite the interested reader to suggest improvements to our methodology. See the stub of the Sage/Python code used for the `brute-force' implementation:  

\begin{verbatim}
def evaluate_character_formula(mu,nu, ell):
    """ Simple algorithm for the Kronecker """
    n = sum(mu)
    conjugacy_class_sizes = list(map(lambda x: len(x.list()), SymmetricGroup(n).conjugacy_classes()))

   
    start = timeit.default_timer()
    a = SymmetricGroupRepresentation(mu).to_character().values()
    b = SymmetricGroupRepresentation(nu).to_character().values()
    c = SymmetricGroupRepresentation(ell).to_character().values()
   
    # Kronecker is not normalized!
    sum([x*y*z*cc_size for x in a for y in b for z in c for cc_size in conjugacy_class_sizes])
    end = timeit.default_timer()

    return end - start
\end{verbatim}

We compared this algorithm to  the implementation at \url{https://github.com/qi-rub/barvikron/} (accessed on Jul 22 2024), running Latte Integrale \url{https://github.com/latte-int/latte-distro} (accessed on Jul 22 2024) implementation of Barvinok's algorithm. Unfortunately, we did not succeed at setting up the alternative barvinok package on the backend. We also only used a single processor implementation of the algorithm, both for the brute-force and barvikron implementation, and timed the algorithm evaluation on: 

\begin{verbatim}
  Model Name:	MacBook Pro
  Model Identifier:	Mac14,6
  Model Number:	Z17400188LL/A
  Chip:	Apple M2 Max
  Total Number of Cores:	12 (8 performance and 4 efficiency)
  Memory:	64 GB
\end{verbatim}
Fig.~\ref{fig:numerics} shows the comparison of the CDW algorithm runtime to a bruteforce implementation also timed using the timeit python package on this hardware. We only ran the CDW algorithm for inputs with up to $6$ partitions and up to $n=9$ boxes. At least for inputs with large number of parts and small number of boxes, the CDW algorithm seems to be outperformed by the naive algorithm for short (small number of boxes compared to number of parts) partitions.
\begin{figure}[h]
    \centering
    \includegraphics[scale=0.6]{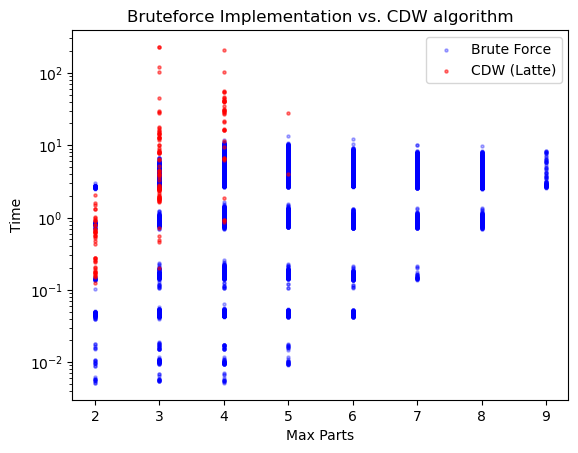}
    \caption{Christandl-Doran-Walter \cite{Christandl12} algorithm vs brute-force implementation for "short" partitions. Max number of parts vs runtime (logscale).}
    \label{fig:numerics}
\end{figure}

This is of course not the set of inputs on which (still formally polynomial-time) CDW algorithm shines: it works best for partitions of small length and (very) large number of boxes; in fact \url{https://github.com/qi-rub/barvikron/} reports evaluation for $\lm = [400000,200000,100000], \mu = [500000,100000,100000], \nu = [300000,200000,200000]$, which is clearly beyond the capability of both the bruteforce algorithm as well as the quantum algorithms proposed here (the irrep dimensions become too large). What the numerics suggest is that the CDW algorithm has a poor runtime dependence on the number of parts in the inputs. See also the discussion in Ref.~\cite{mishna2022estimating}.

\begin{figure}[h]
    \centering
    \includegraphics[scale=0.6]{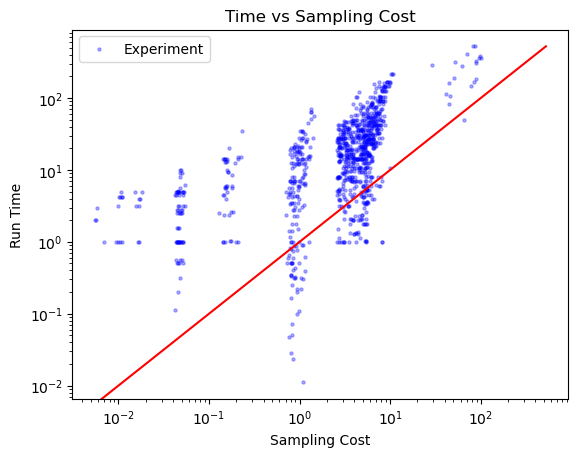}
    \caption{Sampling cost vs runtime (logscale) for subset of inputs on which Kronecker coefficient is nonzero and $3 < n < 10$ }
    \label{fig:numerics2}
\end{figure}

The brute-force algorithm will scale exponentially on the worst possible input, but this doesn't say much about how well it does on a typical input (say, sampled from a Plancherel distribution), especially in comparison to the quantum algorithm.  Assuming that we can get a relatively efficient implementation of the algorithm in Thm.~\ref{theorem_quantum_algo}, we can get some intuition by comparing the sample cost to the runtime of the brute force algorithm (see Fig.~\ref{fig:numerics2}). For a very crude argument, supported by the plot, if we count each sample as one computational step, the quantum algorithm often does quite a bit better than the classical implementation. This should be interpreted cautiously though - our implementation of the Kronecker algorithm is very far from optimal and observations like this are of course only a necessary conditions for existence of a quantum advantage. Additional work has to be done to provide more (or less!) support for Conjecture~\ref{Conjecture:Kron} and we would like to invite the interested readers to implement a more efficient algorithm for Kronecker evaluation and compare to the sampling cost of the quantum algorithm. We would like to thank S.~Bravyi, among many other things, for suggesting the plot in Fig.~\ref{fig:numerics2}.

We end this section with an interesting observation: we noticed that it becomes very challenging to sample a random (Placherel-weighted) triple of partitions for which the Kronecker coefficient became nonzero. This suggests that we can use Thm.~\ref{theorem_quantum_algo} and Thm.~\ref{theorem_quantum_algo_ind} to solve the following distributional search problem:  Given $\lambda, \mu$ sampled from the Plancherel distribution on $n$, find $\nu$ such that $g_{\lambda \mu \nu} > 0$. The hardness of finding such coefficient as we increase the input size \emph{suggests} that the problem is hard to solve by a simple classical algorithm, but a quantum algorithm solves it in polynomial time. It may be interesting to study this problem in more detail.

\subsection{Plethysm Coefficients: Details}

\paragraph{Semidirect Product Groups.} 

Given a finite group $F$ with an action on another finite group $H$, that is, provided a group homomorphism $\phi:F \arr {\rm Aut}(H)$ exists, the \textit{semidirect product} group $H \rtimes_{\phi} F$ is the group with set being the cartesian product $H\times F$, i.e., all pairs $(h,f)$ for $h\in H$ and $f\in F$, and with a group operation $\circ: (H\rtimes_{\phi} F) \times (H\rtimes_{\phi} F) \arr H\rtimes_{\phi} F$ given by
\begin{align}
&(h',e_F) \circ (h,f) = (h'\circ h,f)  \\
&(e_H,f') \circ (h,f) = (\phi(f')\circ h,f'\circ f)  \,,
\end{align}
for all $h,h'\in H$ and $f,f'\in F$, where $e_H$ and $e_F$ are used to denote the identity element on each group.

Perhaps the simplest example for a semidirect product group is given by the \textit{Dihedral group} $D_n = Z_n \rtimes_\phi Z_2$, which consists of the semidirect product between cyclic groups of order $2$ and $n$, where $Z_2=\{0,1\}$ and $Z_n=\{i\}_{i\in[n]}$, and where $\phi(0)\cdot i=i$  and $ \phi(1)\cdot i = -i \, $.

\paragraph{Wreath Product Groups.} 
A special kind of semidirect product groups, when $H=G^{\times n}$ and $F\subseteq S_n$ acts on $H$ by permuting the copies of $G$ in $H$, are called \textit{wreath product} groups, and denoted $G\wr F = G^{\times n} \rtimes S_n$. Let us focus on the case $F=S_n$. Intuitively, $G \wr S_n$ corresponds to the set of $n\times n$ permutation matrices with "ones" replaced by elements of $G$. Given $U=\spn\{\ket{u}\}_{u}$ and $V=\spn\{ \ket{v} \}_v$ $G$ and $S_n$ modules respectively, the wreath product $U\wr V$ is a $G\wr S_n$ module with vector space $U\tn \otimes V$ and group action

\begin{align}
    &(g_1,\cdots,g_n,e_{S_n})\cdot \ket{u_1}\otimes \cdots \otimes \ket{u_n} \otimes \ket{v} = \Big(g_1\cdot \ket{u_1}\Big) \otimes \cdots \otimes \Big(g_n \cdot\ket{u_n}\Big) \otimes \ket{v}\, \text{ and }\\
    &(e_G,\cdots,e_G,\sg) \cdot \ket{u_1}\otimes \cdots \otimes \ket{u_n} \otimes \ket{v} = \ket{u_{\sg^{-1}(1)}} \otimes \cdots \otimes \ket{u_{\sg^{-1}(n)}} \otimes \Big( \sg\cdot\ket{v} \Big)\,.
\end{align}

In fact, one can show $U\wr V$ is irreducible whenever $U$ and $V$ are~\cite{stembridge1989eigenvalues}, yet not all irreps of $G\wr S_n$ come from the wreath of irreps of $G$ and $S_n$. Although we will only care for such subset of irreducibles, let us mention that the strategy to obtain a complete set of irreps for $G\wr S_n$ consists of considering representations of $G\wr S_n$ that are induced from certain representation of its Young-subgroup-like subgroups $G\wr S_\lm$ for $\lm\vdash n$. We will not need this here so we refer the reader to Ref.~\cite{stembridge1989eigenvalues} for further details.

\paragraph{Plethysm Coefficients.}
\label{App:Plethysms} 

Consider the irreducible decomposition of the reducible $\SU(d)$ module obtained by composing Schur functors $\mbb{S}^\lm$ and $\mbb{S}^\mu$ on $\mbb{C}^d$
\begin{align}
    \mbb{S}^\lm ( \mbb{S}^\mu (\mbb{C}^d)) \cong \bigoplus_\nu \mbb{C}^{a_{\lm\mu}^\nu}\otimes \mbb{S}^\nu(\mbb{C}^d)
\end{align}
where $a_{\lm\mu}^\nu$ are the Plethysm coefficients. The computational complexity of the map $a: \PC_{n,d}^{\times 3} \arr \mbb{N}$ was studied by Fischer and Ikenmeyer~\cite{Fischer2020}. 
Most importantly (at least for our purposes), the plethysm coefficients also appear as multiplicities in the restriction of $S_n$ irreps to the wreath product subgroup $S_a\wr S_b \subset S_{ab}$ (where $ab=n$), i.e., the subgroup with set $S_a^{\times b} \times S_b$ with $S_b$ permuting the copies of $S_a$:

\begin{equation}
a^\nu_{\lm\mu} = \mult{r_{S_a\wr S_b}^{\lm\mu}}{ r_{S_{ab}}^\nu \downarrow^{S_{ab}}_{S_a\wr S_b}}\,,
\end{equation}
where $r_{S_a\wr S_b}^{\lm\mu}$ denotes the $S_a\wr S_b$-irrep  $r_{S_a}^\lm \wr r_{S_b}^\mu$ of dimension $d_\lm^b d_\mu$.

\subsection{Induction Algorithm}

Our induction algorithm uses a subgroup register $\mc{S}=\mbb{C}[H]=\spn\{\ket{h}\}_{h\in H}$ and a coset register $\mc{V}=\mbb{C}[G/H]=\spn{\ket{t_i}}_{i\in[N]}$.
We begin with a maximally-mixed state over some isotypic in $\mc{S}$ 
\begin{align}
    \rho^\b_{H} &= 
V^\dag \left( \ket{\b}\bra{\b} \otimes \frac{I_{d_\b}}{d_\b} \otimes \frac{I_{d_\b}}{d_\b}\right) V = \Pi_\b^H / d_\b^2,
\end{align}
in the subgroup register $\mc{S}$, where $V$ is the $H$-QFT. This could be prepared with $H$-QFT or given access to some oracle.

We make use of an embedding unitary $U_E$ acting on $\mc{V}\otimes \mc{S}$ mapping a pair $(t,h)$ of transversal and subgroup elements to the corresponding $g = th$
\begin{align}
    U_E \Big( \ket{t}_{\mc{V}} \otimes \ket{h}_{\mc{S}} \Big) = \ket{g}.
\end{align}
The mapping $(t,h)\arr g$ is unique and the above transformation is unitary. It can be efficiently implemented as a quantum circuit as long as there is an efficient classical circuit for group multiplication. This way, we can ``match the bases'' of the $H$-regular representation and the $G$-regular representation. Note that Ref.~\cite{krovi2019efficient} used similar trick for QFT over induced representations. Let: 

\begin{align}
     R(g) &= \sum_{a \in G} \ket{ga}\bra{a},  & S(h) &= \sum_{b \in H} \ket{hb}\bra{b}, & I_N &= \sum_{c \in |G|/|H|} \ket{c}\bra{c}. 
 \end{align}
 
As depicted in Fig.~\ref{fig_2:induction}, we initialize the state $\rho^\b_H \otimes I_N/N$ in $\mc{S}\otimes\mc{V}$, then apply the embedding unitary and finally weak Fourier sample $G$-irreps. The probability of obtaining $G$-irrep $\a$ is given by 
\begin{align}
p_\beta(\alpha) &= \frac{1}{N}\Tr \left[ \Pi^\a_G U_E (\rho^\beta_H  \otimes I_N ) U_E^\dag \right]
= \frac{d_\a d_\b}{N |G||H| d_\b^2} \sum_{g\in G} \sum_{h\in H} \chi_G^\a(g^{-1}) \chi_H^\b(h) \Tr[R(g) U_E (S(h^{-1})\otimes I_N)U_E\ad]\,,
\end{align}
where we've used that for a finite group $F$ and a representation $T$, the projector into its $\g$-isotypic subspace is \cite{serre1977linear}: 
\begin{align} \Pi^\g_F=\frac{d_\g}{|F|} \sum_{f\in F} \chi^\g_F(f^{-1}) T(f).\end{align}
We now unwrap $\Tr[R(g)   U_E (S(h^{-1}) \otimes I_N) U_E^\dag]$ to get  $U_E (S(h^{-1}) \otimes 1) U_E^\dag = \sum_{b \in H} \sum_{i \in [N]} \ket{h^{-1}bt_i} \bra{bt_i}\,$. Notice that $bc$ is an element of $G$ and that the register has dimension $|G|$ and we assume that we match the basis of $R(g)$.  It follows from orthogonality that: 
\begin{align}
     \Tr[R(g)   U_E (S(h^{-1}) \otimes I_N) U_E^\dag]  &=  \sum_{a \in G} \sum_{b \in H} \sum_{i\in [N]} \braket{bt_i}{ga} \braket{a}{h^{-1}bt_i} \\ &= \sum_{b \in H} \sum_{i \in[N]} \braket{g^{-1}bt_i}{h^{-1}b t_i} = |G| \delta_{hg}.
\end{align}
Plugging this back gives: 
\begin{align}
p_\beta(\alpha) &= \frac{d_\a}{N d_\b |H|} \sum_{h\in H} \chi_G^\a(g^{-1}) \chi_H^\b(h) \\
&=\frac{d_\a }{Nd_\b} \mult{r^\b_H}{r^\a_G \downarrow^G_H} \\
&=\frac{d_\a }{Nd_\b} \mult{r^\a_G}{r^\b_H \uparrow^G_H}\,. \\
\end{align}
where the last line follows from Frobenius's reciprocity.

\subsection{Connection to the Hidden Subgroup Problem}
We briefly remark that Thm.~\ref{theorem_quantum_algo} has an interesting relationship to the standard approach to the Hidden Subgroup Problem (HSP). In HSP, one aims to find a subgroup $H \subseteq G$ using oracle access to a function $f_H:G \arr [N]$ for $N = |G|/|H|$ that takes distinct values on cosets of $H$ in $G$:
\begin{equation}
    G = \bigcup_{i=1}^N t_i H\quad \text{ and }\quad f(t_iH)=i\,,
\end{equation}
where $\{t_i\}$ is a set of $N$ elements each chosen from a distinct coset $i$ of $H$ in $G$. 
The {standard approach} to HSP prepares the state $\sum_{g \in G} \ket{g}_{\mathcal{C}} \otimes \ket{f(g)}_{\mathcal{T}}$ and ignores $\mathcal{T}$ to obtain the {hidden subgroup state}:
\begin{equation}
    \rho_{G}^{G/H} =\frac{|H|}{|G|} \sum_{i=1}^N \ketbra{t_i H}{t_i H},
\end{equation}
where $\ket{t_i H} =  \sum_h \ket{t_i h} / \sqrt{|H|}$ is the \emph{coset state}.  $\rho^{G/H}_G$ is the maximally-mixed state over the subspace $\mbb{C}[G/H] =\spn\{ \ket{t_i}\}_{i=1}^N \subseteq \mbb{C}[G]$ of left cosets of $H$ on $G$. Such subspace $\mbb{C}[G/H]$ is a $G$-representation that is induced from the trivial irrep of $H$, $r^\triv_H\uparrow_H^G$. One approach to take from here is to weak Fourier-sample: apply $G$-QFT to $\rho_{G}^{G/H}$ and measure the $G$-irrep register, obtaining the output irrep $\a$ of $G$ with probability: 
\begin{equation}
\begin{aligned}
    p_{G/H}(\a) &= \mult{r^\a_G }{r^\triv_H\uparrow_H^G} \frac{d_\a}{\text{dim}(r^\triv_H\uparrow_H^G) } = \mult{r^\a_G }{r^\triv_H\uparrow_H^G} \frac{d_\a |H|}{|G| }.
    \label{Eq:HSP}
    \end{aligned}
\end{equation}
For $G=S_n$ and $H = S_\mu$, 
one obtains $S_n$-irrep $\nu$ with probability $p_{S_n/S_\mu}(\nu) =K^\mu_\nu d_\nu / (n!/\mu!)$.
Compare this with the distribution sampled in Thm.~\ref{theorem_quantum_algo}, which would be $K_\nu^\mu / d_\nu$.

\subsection{Weak Fourier Sampling of the Hidden Subgroup State}

For reference, we show how to arrive at Eq.~\eqref{Eq:HSP}. We have that: 
\begin{align}
   p_{G/H}(\alpha) &= \Tr\left(\rho_{G}^{G/H} \Pi_\alpha \right) = \frac{|H|}{|G|} \sum_{i=1}^N \Tr\left( \ket{t_i H}\bra{t_iH} \Pi_\alpha \right) = \frac{|H|}{|G|}\sum_{i=1}^N \alpha_i,
\end{align}
so by linearity, we can just analyze $\alpha_i$: 
\begin{align}
    \alpha_i &=  \frac{1}{|H|} \sum_{k, \ell \in [{d_\alpha}]} \sum_{h,h' \in H}  \braket{t_i h}{U | \alpha, k, \ell} \braket{\alpha, k, \ell}{U^\dag | t_i h'},
\end{align}
where $U$ is the $G$-QFT with matrix elements (see also Eq.~\ref{Eq:QFT_matrix_elements}): 
\begin{align}
  U\ket{g} &= \sum_{\alpha} \sqrt{\frac{d_\alpha}{|G|}} \sum_{ij} \braket{i}{r^\alpha_G(g)|j}\ket{\alpha, i, j},
\end{align}
This gives:
\begin{align}
    \alpha_i &= \frac{d_\alpha}{|H| |G|} \sum_{k, \ell} \sum_{h,h' \in H}  \Tr(r_G^\alpha(t_i h) r_G^\alpha(t_i h')^\dag) \\ &= \frac{d_\alpha}{|H| |G|} \sum_{k, \ell} \sum_{h,h' \in H}  \Tr(r_G^\alpha(t_i h h'^{-1} t_i^{-1}) ) \\ 
    &= \frac{d_\alpha}{|G|}  \sum_{h \in H} \chi_G^\alpha(h),
\end{align}
where $\chi_G^\alpha$ the irreducible character $\alpha$ of $G$. Character orthogonality leads to: 
\begin{align}
\frac{1}{|H|}\sum_{h \in H} \chi_G^\alpha(h) = (\chi_G^\alpha, \chi^{\triv}_H)_H = \mult{r^{\triv}_H}{r^\alpha \downarrow^G_H}. \end{align} It follows that: 
\begin{align}
\Tr\left(\rho_{H} \Pi_\alpha \right) &= \frac{|H|d_\alpha}{|G|}\mult{r^{\triv}_H}{r^\alpha_G \downarrow^G_H}.
\end{align}

\end{document}